\documentclass[11pt,letterpaper,UKenglish,cleveref, autoref]{article}
\usepackage[utf8]{inputenc}

\usepackage{thmtools}
\usepackage{thm-restate}

\usepackage{fullpage}
\usepackage[colorlinks=true,linkcolor=blue,citecolor=blue,urlcolor=blue]{hyperref}

\usepackage{amsmath,amsfonts,amssymb,amsthm}
\usepackage{latexsym}
\usepackage{xcolor}
\usepackage{caption}
\usepackage{subcaption}

\usepackage{bm}
\usepackage{graphicx}

\usepackage{blkarray}
\usepackage{enumitem}   
\usepackage{float}

\newcommand{\squeezelist}{\setlength{\itemsep}{0pt}}

\usepackage{hyperref}

\newtheorem{theorem}{Theorem}%
\newtheorem{lemma}[theorem]{Lemma}
\newtheorem{proposition}[theorem]{Proposition}
\newtheorem{corollary}[theorem]{Corollary}
\newtheorem{claim}[theorem]{Claim}
\newtheorem{observation}[theorem]{Observation}

\DeclareMathOperator*{\argmin}{argmin}

\newcommand{\newcomment}[1]{{\color{black}#1}}
\newcommand{\revised}[1]{{\color{black}#1}}

\newcommand{\changed}[1]{{\color{black}#1}}
\newcommand{\fchanged}[1]{{\color{black}#1}}

\newcommand{\newestchanged}[1]{{#1}}
\newcommand{\attention}[1]{{#1}}

\def\defn#1{\textit{\textbf{\boldmath #1}}}

\title{The Visibility Center of a Simple Polygon}

\author{Anna Lubiw and Anurag Murty Naredla}

\begin{document}

\maketitle

\begin{abstract}
We introduce 
the \emph{visibility center} of a set of points inside a %
polygon---a point 
$c_V$ such that the maximum geodesic distance from $c_V$ to see any point in the set is minimized.
For a simple polygon of $n$ vertices and a set of $m$ points inside it, we give an $O((n+m) \log {(n+m)})$ time algorithm to find the visibility center.  We find the  visibility center of \emph{all} points in a simple polygon in $O(n \log n)$ time.

Our algorithm reduces the visibility center problem to the problem of finding the geodesic center of a set of half-polygons inside a polygon, which is of independent interest.  We give an $O((n+k) \log (n+k))$ time algorithm for this problem, where $k$ is the number of half-polygons.

\end{abstract}

\pagenumbering{arabic} 
\section{Introduction}

Suppose you want to guard a polygon and you have many sensors but only one guard to check on the sensors.
The guard must be positioned at a point $c_V$ in the polygon such that when a sensor at any query point $u$ sends an alarm, the guard travels from $c_V$ on a shortest path inside the polygon to \emph{see} point $u$; 
 the goal is to minimize the maximum distance the guard must travel.
More precisely, we must choose $c_V$ to minimize the maximum, over points $u$, of the geodesic distance from $c_V$ to a point that sees $u$.
The optimum guard position $c_V$ is called the \defn{visibility center} of the set $U$ of possible query points.
See Figure~\ref{fig:figure_with_visibility_center_paths}.
We give an $O((n+m) \log {(n+m)})$ time algorithm to find the visibility center of a set $U$ of size $m$ in an $n$-vertex simple polygon.
To find the visibility center of \textit{all} points inside a simple polygon, 
we can restrict our attention to the 
vertices of the polygon, which 
yields an $O(n \log n)$ time algorithm.

\begin{figure}[h]
\begin{subfigure}{0.5\textwidth}
  \centering
\includegraphics[width=\linewidth]{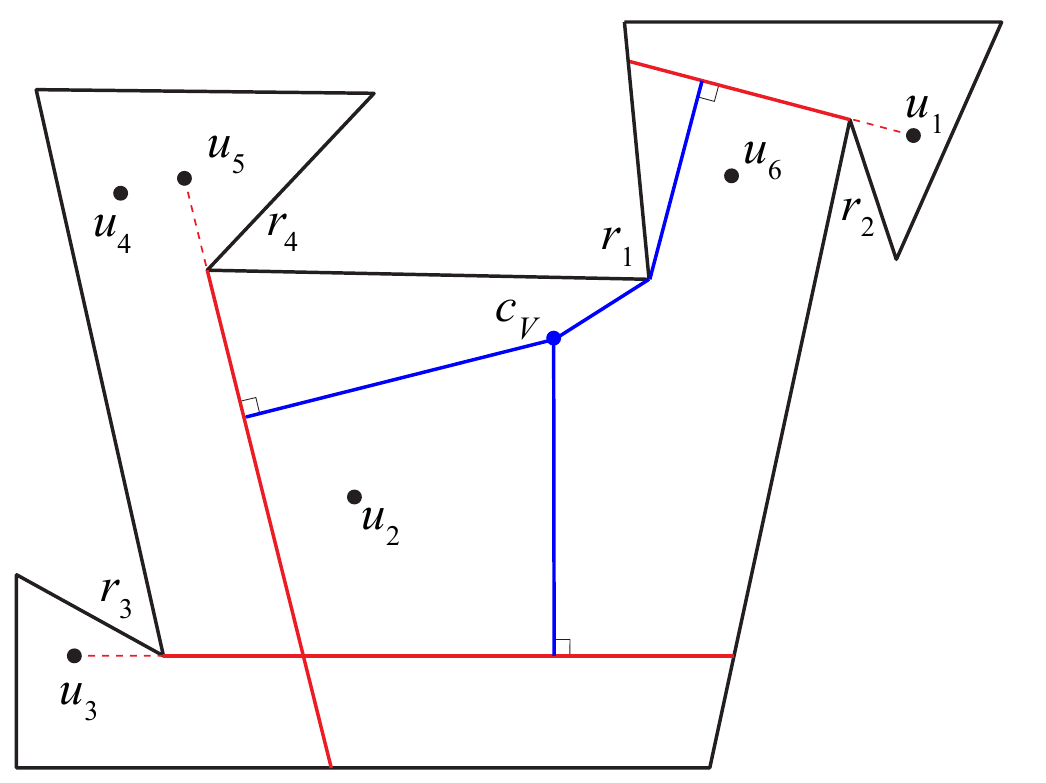}
\end{subfigure}
\begin{subfigure}{0.5\textwidth}
  \centering
\includegraphics[width=\linewidth]{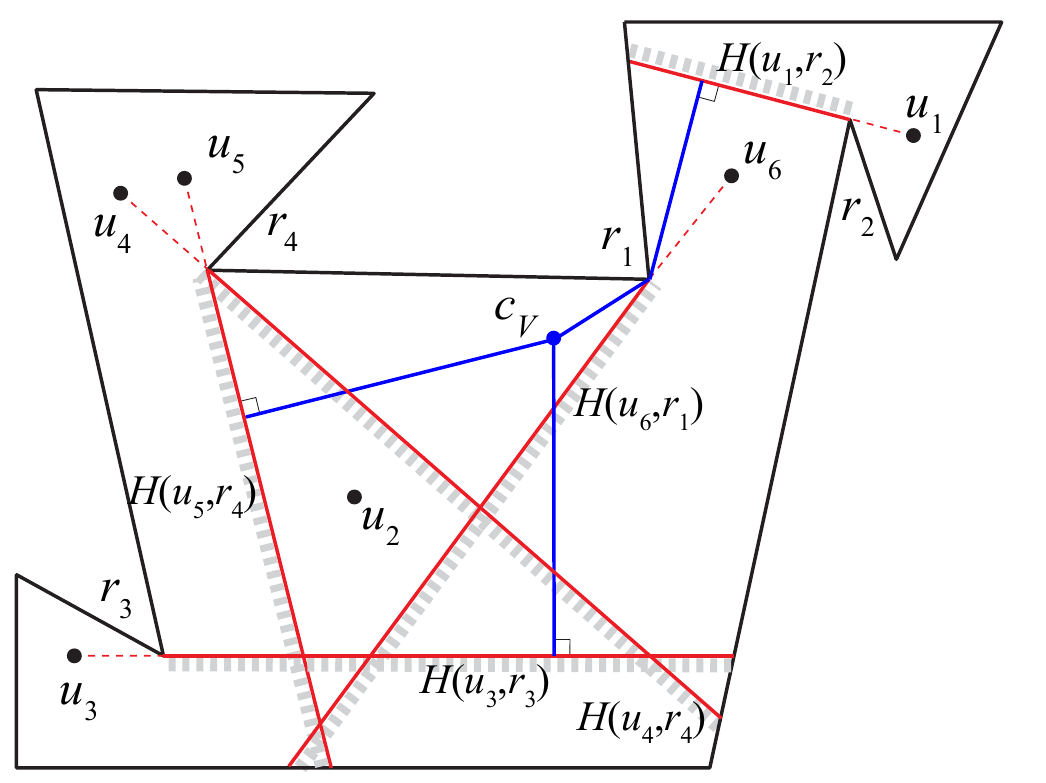}
\end{subfigure}
\caption{
(left) Point $c_V$ is the \defn{visibility  center} of points $U = \{u_1, \ldots, u_6\}$.
Starting from $c_V$, the three points we need to travel (equally) farthest to see are  $u_1, u_3$ and $u_5$. The shortest paths (in blue) to 
see these points must reach the  half-polygons bounded by the chords (in red) emanating from the points.
(right) 
Equivalently, $c_V$ is the geodesic center of five half-polygons (each shown as a red boundary chord shaded on one  side).}
\label{fig:figure_with_visibility_center_paths}
\end{figure}

To the best of our knowledge, the idea of visibility centers is new, 
though it is a very natural concept
that combines %
two significant branches of computational geometry:
visibility problems~\cite{ghosh2007visibility};
and center problems and farthest Voronoi diagrams~\cite{aurenhammer2013voronoi}. 

There is a long history of finding ``center points'', for various definitions of ``center''. 
The most famous of these is Megiddo's linear time algorithm~\cite{megiddo_linear} to find the 
center of a set of points in the plane (Sylvester's
``smallest circle'' problem). 

Inside a polygon the relevant distance measure is not the Euclidean distance but rather the shortest path, or  
\emph{geodesic}, distance. 
The \defn{geodesic center} of a simple polygon is a point $p$ that minimizes the maximum geodesic distance from $p$ to any point $q$ of the polygon, or equivalently, the maximum geodesic distance from $p$ to any vertex of the polygon. 
Pollack, Sharir, and Rote~\cite{pollack_sharir} gave an $O(n \log n)$ time divide-and-conquer algorithm to find  the geodesic center of a polygon. 
Our algorithm builds on theirs. 
A more recent algorithm by Ahn et al.~\cite{linear_time_geodesic} finds the geodesic center of a polygon in linear time.
Another notion of the center of a polygon is the link center, which can  be found in $O(n \log n)$ time~\cite{djidjev1992ano}.

Center problems are closely related to farthest Voronoi diagrams, since the center is %
a vertex of the corresponding farthest Voronoi diagram, 
\changed{or a point on an edge of the Voronoi diagram in case the center has only two farthest sites}.  Finding the farthest Voronoi diagram of points in the plane takes $\Theta(n \log n)$ time---thus is it strictly harder to find the farthest Voronoi diagram than to find the center.
However, working inside a simple polygon helps for farthest Voronoi diagrams: 
the farthest geodesic Voronoi diagram of the vertices of a polygon can be found in time $O(n \log \log n)$~\cite{oh2020geodesic}. 
\newcomment{Generalizing the two scenarios (points in the plane, and polygon vertices), yields the} 
problem of finding the farthest Voronoi diagram of $m$ points in a polygon, which was first solved by Aronov et al.~\cite{aronov1993furthest} with run-time $O((n+m)\log (n+m))$, and improved in a sequence of papers~\cite{oh2020geodesic,barba2019optimal,oh2020voronoi}, with the current best run-time of 
$O(n + m \log m)$~\cite{wang2021}.

Turning to visibility problems in a polygon, 
there are algorithms for the  %
``quickest visibility problem''---to find the shortest path from point $s$ to see point $q$, and to solve the query version where $s$ is fixed and $q$ is a query point~\cite{arkin2016shortest,wang2019quickest}.  For a simple polygon~\cite{arkin2016shortest}, the preprocessing time and space are $O(n)$ and 
the query time is $O(\log n)$.
\changed{We do not use these results}
in our algorithm to find the visibility center $c_V$, but they are useful afterwards to find the actual shortest path from $c_V$ to see a query point. 

A more basic version of our problem is to find, if there is one, a point that sees all points in $U$.  The set of such points is  the \defn{kernel} of $U$.  
When $U$ is the set of vertices, the kernel can be found in linear time~\cite{lee1979optimal}. For a general set $U$, Ke and O'Rourke~\cite{ke1989computing} gave an $O(n + m\log (n + m))$ time algorithm to compute the kernel, and we use some of their results in our algorithm.  

Another problem somewhat similar to the visibility center problem is 
the watchman problem~\cite{chin1991shortest,touring}---to find a minimum length tour from which a single guard can  see the whole polygon.  Our first  step is similar in flavour to the first step for the watchman problem, namely, to replace the condition of ``seeing'' everything by a condition of visiting certain ``essential chords''.

\paragraph*{Our Results}
The \defn{distance to visibility} from a point $x$ to point $u$ in $P$, denoted $d_V(x,u)$ is the minimum distance in $P$ from $x$ to a point $y$ such that $y$ sees $u$.
For a set of points $U$ in $P$, 
the \defn{visibility radius} of $x$ with respect to $U$ is $r_V(x,U) := \max \{d_V(x,u) : u \in U\}$.
The \defn{visibility center} $c_V$ of $U$ is a point $x$ that minimizes $r_V(x,U)$.  
Our main result is:

\begin{theorem}
\label{thm:vis-center}
There is an algorithm to find the visibility center of a point set $U$ of size $m$ in a simple $n$-vertex polygon $P$ with run-time $O((n+m)\log (n+m))$. 
\end{theorem}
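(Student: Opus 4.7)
The plan is to reduce finding the visibility center to finding the geodesic center of a set of half-polygons, for which the paper announces an $O((n+k)\log(n+k))$ algorithm. The reduction rests on the following structural fact about visibility. For each query point $u_i \in U$, the visibility polygon $V(u_i) \subseteq P$ is bounded on the interior of $P$ by chords called \emph{windows}, each emanating from a reflex vertex of $P$ along a ray through $u_i$. A window $w$ splits $P$ into the half-polygon $H_{i,w}$ containing $u_i$ and a complementary pocket $Q_{i,w}$. For any point $x$, one checks directly that $d_V(x,u_i)=0$ if $x \in V(u_i)$, and otherwise equals the geodesic distance from $x$ to the window $w$ that bounds the unique pocket of $V(u_i)$ containing $x$—equivalently, the geodesic distance from $x$ to the half-polygon $H_{i,w}$.

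From this, the visibility center $c_V$ admits a clean description: for each $u_i$ with $d_V(c_V,u_i)>0$, a unique pocket of $V(u_i)$ contains $c_V$; call the corresponding half-polygon $H_i^\star$ the \emph{essential} half-polygon of $u_i$. Then $c_V$ is precisely the geodesic center of the collection $\{H_i^\star\}$, which has size $k \leq m$. If these essential half-polygons were known, invoking the half-polygon geodesic center algorithm would finish the job in $O((n+m)\log(n+m))$ time, which is exactly the claimed bound.

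The main obstacle is the circularity inherent in identifying the essential half-polygons: the pocket that matters for $u_i$ depends on the location of $c_V$, which is what we are trying to compute. My plan to resolve this is in the spirit of the watchman-route reduction mentioned in the introduction, where a visibility condition is replaced by a condition of visiting certain essential chords identified up front. Concretely, I would try to show that for each $u_i$ the correct chord can be pinned down using only intrinsic information about $u_i$ and $P$—for instance, by isolating a single window of $V(u_i)$ that is ``outward-facing'' in a sense that is stable over all points $x$ outside the kernel of $U$. I would expect to use the machinery of Ke and O'Rourke to enumerate and filter the candidate windows of all $V(u_i)$ within the allowed time, producing at most one essential chord per input point. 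Proving that this chord is the one realized at $c_V$ is the delicate step and will likely rest on a monotonicity or non-crossing argument within pockets.

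With the essential half-polygons identified, the final step is a black-box application of the half-polygon geodesic center algorithm. Combining the $O((n+m)\log(n+m))$ preprocessing to compute visibility polygons and select essential chords with the $O((n+k)\log(n+k))$ half-polygon algorithm (with $k \leq m$) delivers the desired $O((n+m)\log(n+m))$ bound. The hardest technical piece I foresee is the correctness of the chord-selection step; the rest should be standard shortest-path and visibility-polygon machinery in simple polygons.
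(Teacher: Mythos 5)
Your reformulation of the problem is correct: $d_V(x,u)$ equals the geodesic distance to the half-polygon determined by the window of $V(u)$ whose pocket contains $x$, and at the optimum $c_V$ there is one ``active'' half-polygon per $u_i$ with positive distance. The paper uses the same reformulation (Observation~\ref{obs:vis-to-half-polygon}). But the core of the proof---how to identify a small set of half-polygons \emph{before} knowing $c_V$---is left as a wish in your write-up, and the specific fix you sketch does not work. You propose to isolate, for each $u_i$, a single ``outward-facing'' window of $V(u_i)$ that is stable over all of $P$ minus the kernel of $U$. No such window exists in general: $V(u_i)$ can have many pockets in all directions (think of $u_i$ near the hub of a star- or pinwheel-shaped polygon), and which pocket contains $c_V$ genuinely depends on the entire ensemble $U$, not on intrinsic information about $u_i$ and $P$. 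So the ``monotonicity or non-crossing argument within pockets'' you hope for cannot exist in your framing; the circularity you correctly identified is not resolvable this way.

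The paper does not attempt to pick one chord per point. Instead it overcounts in a controlled way. One route (${\cal H}_{\rm reflex}$) keeps at most two minimal half-polygons per reflex vertex of $P$---indexed by reflex vertices, not by points of $U$---giving $O(n_r)$ candidates computable by Ke--O'Rourke's sweep. The more refined route (${\cal H}_{\rm hull}$) takes the geodesic convex hull $R$ of $U$, includes $H(u,r)$ for each hull edge $(u,r)$ with $r$ reflex, computes a preliminary center $c_0$ of that set, re-hulls $U\cup\{c_0\}$, and adds the new hull-edge half-polygons. The correctness argument is then: $c_V$ is pinned down by only two or three half-polygons (Claim~\ref{claim:2-or-3-vectors}); at least two of them already appear as hull edges of $R$ (Claims~\ref{claim:H0-condition} and \ref{claim:at_least_two}); and if a third is missed, the preliminary center $c_0$ must lie on the far side of its chord, forcing it to appear as a hull edge of $R'$. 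This two-stage ``guess-and-repair'' via $c_0$ is the idea that replaces your intended intrinsic selection, and it is what your proposal is missing.
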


The key to our algorithm is to reformulate the visibility center problem in terms of distances to certain \emph{half-polygons} inside the polygon. %
We illustrate the idea by means of the example in Figure~\ref{fig:figure_with_visibility_center_paths} where the visibility center of the 6-element point set $U$ is the 
 \emph{geodesic center} of a set of five half-polygons.

More generally, we will reduce the problem of finding the visibility center to the problem of finding a geodesic center of a linear number of half-polygons.
The input to this problem is a set $\cal H$ of $k$ half-polygons (see Section~\ref{sec:preliminaries} for precise definitions) and the goal is to find a \emph{geodesic center} $c$ that minimizes the maximum distance from $c$ to a half-polygon.  
More precisely, 
the \defn{geodesic radius} from a point $x$ to $\cal H$ is $r(x,{\cal H}) := \max \{d(x,H) : H \in {\cal H}\}$, and 
the \defn{geodesic center} $c$ of $\cal H$ is a point $x$ that minimizes $r(x,{\cal H})$.
Our second main result is:

\begin{theorem}
\label{thm:geo-center}
There is an algorithm to find the geodesic center of a set $\cal H$ of $k$ half-polygons in a simple $n$-vertex polgyon $P$ with run-time $O((n+k)\log (n+k))$.
\end{theorem}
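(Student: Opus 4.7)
The plan is to adapt the Pollack--Sharir--Rote divide-and-conquer algorithm for the geodesic center of a polygon so that it handles half-polygon sites rather than point sites. The key property is that for each half-polygon $H \in \mathcal{H}$, bounded inside $P$ by a chord $\chi_H$, the function $d(\cdot,H)$ is convex on $P$: it vanishes on $H$ and equals the geodesic distance to $\chi_H$ outside $H$. Hence $r(\cdot,\mathcal{H}) = \max_{H \in \mathcal{H}} d(\cdot,H)$ is a convex function on $P$, and its minimizer (unique up to a possible flat region) is the geodesic center $c$.

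First I would build a \emph{chord decision procedure}: given a chord $\chi$ of $P$, decide on which side of $\chi$ the center $c$ lies, or that it lies on $\chi$. Convexity of $r$ implies that its restriction to $\chi$ is a convex univariate function whose minimum is attained at a unique point $x^\star \in \chi$. Inspecting the half-polygons realizing the maximum at $x^\star$ and the directions their gradients point gives a subgradient of $r$ at $x^\star$, from which the side of $\chi$ containing $c$ can be read off. I would implement this in $O((n+k)\log(n+k))$ time by rooting shortest-path trees at the endpoints of $\chi$, sweeping along $\chi$, and maintaining the upper envelope of the univariate functions $d(\cdot,H)|_\chi$ in a balanced structure.

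Plugging this decision procedure into a balanced chord decomposition of $P$ isolates $c$ within a triangle after $O(\log n)$ cuts. When a cut $\chi$ splits $P$ into two subpolygons and $c$ is determined to lie in the subpolygon $P'$, any half-polygon $H$ whose chord $\chi_H$ is not contained in $P'$ must be transferred across $\chi$; a standard funnel argument shows each such $H$ can be replaced by a weighted \emph{virtual chord} along $\chi$ so that $d(x,H)$ for $x \in P'$ is preserved. Once the recursion bottoms out, $c$ is read off directly from the $O(1)$ surviving sites.

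The main obstacle will be amortizing the cost. Running the decision procedure independently at each of the $O(\log n)$ recursive levels costs $O((n+k)\log^2(n+k))$ in total, one logarithm too many. To shave this extra factor I would maintain the shortest-path trees and the upper envelopes persistently across recursive calls, in the style of Pollack--Sharir--Rote, so that each polygon edge and each half-polygon is touched $O(\log(n+k))$ times overall. A secondary subtlety is that $r$ may be identically zero on an open set (when the half-polygons share a common kernel), in which case the decision procedure must return a kernel point rather than a strict minimizer; this degenerate case can be dispatched at the outset by running Ke--O'Rourke's kernel algorithm on $\mathcal{H}$.
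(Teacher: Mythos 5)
Your high-level skeleton---convexity of $d(\cdot,H)$, a chord decision procedure, and a balanced-chord divide-and-conquer---matches the paper's. But two of your key technical steps are not right, and a third essential step is missing.

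First, your chord decision procedure is a logarithmic factor too slow, and your proposed fix is not the right one. You compute the restriction $r|_\chi$ as an explicit upper envelope maintained in a balanced structure, giving $O((n+k)\log(n+k))$ per call, and then try to recover the budget by maintaining shortest-path trees and envelopes persistently across the recursion. Persistence is not how Pollack--Sharir--Rote get their bound (nor how the paper does), and it is not clear it can be made to work: the chords at different recursion levels are unrelated, so there is no obvious way to share envelope structure. The correct move---which the paper takes---is to make the chord oracle itself run in $O(n+k)$ time. One never builds the upper envelope. Instead one constructs a linear-size \emph{coarse cover}: a set of $O(n+k)$ simple convex pieces (each of the form $d_2(x,u)+\kappa$ or $d_2(x,\bar h)$) defined on subintervals of the chord, whose pointwise max is $r|_\chi$, and then finds the minimizer of that upper envelope by Megiddo-style prune-and-search (pair up functions, compute $O(1)$ extended intersection points per pair, test medians, discard a constant fraction per round). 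That gives a genuinely linear oracle, and then $O(\log n)$ applications cost $O((n+k)\log n)$ with no amortization needed. The nontrivial work in this step is extracting a coarse cover of linear size (rather than $O(nk)$) from the two shortest-path trees augmented with terminals on each half-polygon chord.

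Second, your claim that after $O(\log n)$ cuts you are left with ``$O(1)$ surviving sites'' from which $c$ can be read off directly is wrong, and this is where the bulk of the remaining work lies. Narrowing to a single triangle $T^\star$ of the triangulation does not reduce the number of relevant half-polygons at all: every $H\in\mathcal H$ still contributes to $r$ on $T^\star$. What narrowing buys is the \emph{opportunity} to further cut $T^\star$ into a homogeneous cell where, for each $H$, the combinatorial type of $\pi(x,H)$ is constant. That refinement requires a separate $O((n+k)\log(n+k))$ phase (the paper constructs $O(n+k)$ cutting lines from shortest-path-tree edges and the chords of the $H$'s, and locates the correct cell using an $\epsilon$-net argument plus the chord oracle). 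Only after this does the geodesic problem become Euclidean, and even then it does not collapse to a constant-size instance: you get an LP-type problem with $\Theta(k)$ constraints---find a minimum-radius disk that \emph{intersects} some half-planes and \emph{contains} some disks. Solving that in linear time is a genuine additional ingredient (a Megiddo-style prune-and-search pairing constraints of like type, as in Bhattacharya et al.'s intersection-radius algorithm), and your write-up does not account for it. Finally, your ``virtual chord'' transfer across cuts is not used in the paper and is unnecessary if the oracle always works in the full polygon $P$; it is also hard to see how to replace a half-polygon on the far side of a cut by a single weighted chord while preserving geodesic distance exactly.
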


Our algorithm extends the divide-and-conquer approach 
that Pollack et al.~\cite{pollack_sharir} used to compute the geodesic center of the vertices of a simple polygon.

Our main motivation for finding the geodesic center of half-polygons is to find the visibility center, but the geodesic center of half-polygons is of independent interest.
Euclidean problems of a similar flavour are to find the center (or the farthest Voronoi diagram) of line segments or convex polygons in the plane~\cite{bhattacharya1994optimal, jadhav1996optimal}. These problems are less well-studied than the case of point sites (e.g., see~\cite{aurenhammer2006farthest} for remarks on this). \newcomment{The literature for geodesic centers is even more sparse, focusing almost exclusively on geodesic centers of points in a polygon.}
It is thus
interesting that the center of half-polygons inside a polygon can be found efficiently.
As a special case, we can find the geodesic center of the edges of a simple polygon 
in $O(n \log n)$ time.

The reduction from the visibility center problem to the geodesic center of half-polygons is in  
Section~\ref{section:essential-half-polygons}. The run time is $O((n+m) \log (n+m))$.
The algorithm that proves Theorem~\ref{thm:geo-center} \changed{(finding the geodesic center of half-polygons)} is in Section~\ref{section:half-polygon-center}.  Together these prove Theorem~\ref{thm:vis-center} \changed{(finding the visibility center)}.

\section{Preliminaries}
\label{sec:preliminaries}

We add a few more basic definitions to augment the main definitions given above.
We work with a simple polygon $P$ of $n$ vertices whose boundary $\partial P$ is directed clockwise. A \defn{chord} of $P$ is a line segment inside $P$ 
\changed{with endpoints on $\partial P$.} 
Any chord divides $P$ into two \changed{weakly simple} \defn{half-polygons}.
A half-polygon is specified by its \changed{defining} chord $(p,q)$ with the convention that the half-polygon contains the path clockwise from $p$ to $q$. 

The \defn{geodesic distance $d(x,y)$} (or simply, \defn{distance}) between two points $x$ and $y$ in $P$ is the length of the \defn{shortest path $\pi(x,y)$} in $P$ from $x$ to $y$. For half-polygon $H$, 
the \defn{geodesic distance $d(x,H)$} is the minimum distance from $x$ to a point in $H$.

Points $x$ and $y$ in $P$ are \defn{visible} ($x$ ``sees'' $y$) if the segment $xy$ lies inside $P$.  The \defn{distance to visibility} from $x$ to $u$, denoted $d_V(x,u)$ is the minimum distance from $x$ to a point $y$ such that $y$ sees $u$. If $x$ sees $u$, then this distance is 0, and otherwise it is the distance from $x$ to the half-polygon defined as follows.  
Let $r$ be the last reflex vertex on the shortest path from $x$ to $u$.  Extend the ray $\overrightarrow{ur}$ from $r$ until it hits the polygon boundary $\partial P$ at a point $p$ to obtain a chord $rp$ 
(which is an edge of the \defn{visibility polygon} of $u$).
Of the two half-polygons defined by $rp$, 
let $H(u,r)$ be the one that contains $u$.  See Figure~\ref{fig:figure_with_visibility_center_paths}.

\begin{observation}
\label{obs:vis-to-half-polygon}
$d_V(x,u) = d(x,H(u,r))$.
\end{observation}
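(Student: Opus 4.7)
The plan is to prove $d_V(x,u) = d(x, H(u,r))$ by establishing matching inequalities, leveraging the reformulation that $d_V(x, u)$ is precisely the geodesic distance $d(x, V(u))$ from $x$ to the visibility polygon $V(u) := \{ y \in P : y \text{ sees } u\}$ of $u$.

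For the direction $d_V(x, u) \geq d(x, H(u, r))$, the key geometric fact I will use is that $V(u) \subseteq H(u, r)$. The excerpt already notes that the chord $rp$ is a (``window'') edge of $V(u)$, so $V(u)$ is a connected region of $P$ whose boundary includes $rp$; since $u \in V(u)$ sits on the $u$-side of $rp$, connectedness of $V(u)$ forces all of it onto that same side, namely $H(u, r)$. Any visibility witness $y$ therefore satisfies $d(x, y) \geq d(x, H(u, r))$, and passing to the infimum over such $y$ gives the inequality.

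For the reverse direction $d_V(x, u) \leq d(x, H(u, r))$, I will exhibit an explicit witness. Let $y^*$ be a geodesically-nearest point of $H(u, r)$ to $x$. Assuming $x \notin H(u, r)$ (see below), any path from $x$ into $H(u, r)$ must first cross $rp$, so $y^*$ lies on the chord. Since $rp$ lies on the line through $u$, $r$, $p$ with $r$ between $u$ and $y^*$, the segment $uy^*$ decomposes as $ur \cup ry^*$: the first piece is in $P$ because $r$ is on $\pi(x, u)$ and so sees $u$, and the second piece is in $P$ because $ry^* \subseteq rp$ and $rp$ is a chord of $P$. Hence $y^*$ sees $u$, giving $d_V(x, u) \leq d(x, y^*) = d(x, H(u, r))$.

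The step I expect to be the main obstacle is justifying $x \notin H(u, r)$ whenever $x$ does not see $u$. If this failed, the closest point of $H(u, r)$ to $x$ would be $x$ itself at distance zero, while $d_V(x, u) > 0$, contradicting the observation. The intuition is that if $x$ were in $H(u, r) \setminus V(u)$, then $x$ is hidden from $u$ by some reflex vertex $r' \neq r$ lying inside $H(u, r)$, and $\pi(x, u)$ would have to make its last bend around $r'$ rather than $r$, contradicting that $r$ is the last reflex. Making this precise will likely require examining the direction of the penultimate segment of $\pi(x, u)$ at $r$ together with the taut-string characterization of shortest paths around reflex vertices.
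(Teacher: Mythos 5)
The paper states this as an observation and supplies no proof at all, so there is nothing to compare against line by line; judged on its own, your two-inequality structure is the right one and matches the justification the paper gestures at when it notes that $rp$ is a window edge of the visibility polygon of $u$. The $\geq$ direction (every visibility witness for $u$ lies in $H(u,r)$ because $V(u)$ lies on the $u$-side of its window $rp$) and the $\leq$ direction (the nearest point $y^*$ of $H(u,r)$ lies on $rp$, and $uy^* = ur \cup ry^*$ is a straight segment inside $P$, so $y^*$ sees $u$) are both sound. You have also correctly identified that the whole statement hinges on $x \notin H(u,r)\setminus rp$: if $x$ lay strictly inside $H(u,r)$ the right-hand side would be $0$ while the left-hand side is positive, so this is not a technicality but the actual content of the observation.

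That step is the one genuine gap, and your first intuition for it (blaming a different reflex vertex $r'$ for the occlusion) is not the cleanest route, since $\pi(x,u)$ may bend at many vertices and nothing forces the last bend to move off $r$. The second tool you name is the one that works, and here is how to deploy it. Let $s$ be the vertex of $\pi(x,u)$ preceding $r$. Because the shortest path genuinely turns at $r$, the exterior wedge of $P$ at the reflex vertex $r$ must be contained in the convex angle $\angle sru$ (otherwise the corner at $r$ could be shortcut). A small angle count at $r$ then shows that the ray $\overrightarrow{rs}$ lies in the wedge bounded by the ray $\overrightarrow{rp}$ and the polygon edge at $r$ on the far side from $u$; that wedge lies locally in $\overline{H}(u,r)\setminus rp$. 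So the path enters $\overline{H}(u,r)$ immediately upon leaving $r$ toward $x$, and it cannot re-enter $H(u,r)$ farther back without crossing the chord $rp$ a second time, which would let $rp$ shortcut the path (this is exactly the geodesic-convexity-of-half-polygons argument the paper uses in the proof of Lemma~\ref{lemma:distance_to_half_polygon_is_convex}). Hence $x \in \overline{H}(u,r)\setminus rp$, and since any $x$ on $rp$ itself would see $u$ along the segment $xr\cup ru$, the case analysis is complete. With that paragraph added, your proof is correct.
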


\changed{In the remainder of this section we establish the basic result that 
the visibility center of a set of points $U$ and the geodesic center of a set of half-polygons $\cal H$ are unique except in very special cases, and that two or three tight constraints suffice to determine the centers.} 
We explain this for the geodesic center of half-polygons, but the same argument works for the visibility center (or, alternatively, one can use the  reduction from the visibility center to the geodesic center in Section~\ref{section:essential-half-polygons}).
\changed{Note that 
if the geodesic radius is 0, then any point in the intersection of the half-polygons is a geodesic center.}  

\begin{claim}
\label{claim:2-or-3-vectors}
\changed{Suppose that the geodesic radius $r$ satisfies $r > 0$.}
There is a set ${\cal H}' \subseteq {\cal H}$ of two or three half-polygons such that the set of  geodesic centers of ${\cal H}$ is equal to the set of geodesic centers of ${\cal H}'$ and furthermore 
\begin{enumerate}
\item if ${\cal H}'$ has size 3 then the geodesic center is unique 
(e.g., see Figure~\ref{fig:figure_with_visibility_center_paths})

\item if ${\cal H}'$ has size 2 then either the geodesic center is unique %
or  
the two half-polygons of ${\cal H}'$ have chords that are parallel and the geodesic center consists of a  line segment parallel to them and midway between them.
\end{enumerate}
\end{claim}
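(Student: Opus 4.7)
The plan is to adapt the Pollack--Sharir style analysis of the geodesic center of a set of points to the half-polygon setting, combining geodesic convexity of the radius function, a first-order descent-direction condition at a candidate center, and Carathéodory's theorem in $\mathbb{R}^2$.

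First I would establish that $d(\cdot, H)$ is convex along every geodesic in $P$. For $x \notin H$, the shortest path from $x$ to $H$ must cross the defining chord, so $d(x, H)$ equals the geodesic distance from $x$ to a straight segment, which is convex along geodesics by a standard unfolding argument (a minor extension of the analogous result for geodesic distance to a point). Hence $r(\cdot, \mathcal{H}) = \max_H d(\cdot, H)$ is geodesically convex, its minimizer set $C_{\mathcal{H}}$ is geodesically convex, and every local minimum is global. At any candidate center $c$ with radius $r > 0$, let $T = \{H \in \mathcal{H} : d(c,H) = r\}$ and, for each $H \in T$, let $v_H$ be the unit vector along the first edge of $\pi(c, H)$; the directional derivative of $d(\cdot, H)$ at $c$ in direction $u$ is $-\langle u, v_H\rangle$, so $v_H$ is the direction of first-order steepest descent. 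A separating-hyperplane argument shows $c \in C_{\mathcal{H}}$ if and only if $0 \in \mathrm{conv}\{v_H : H \in T\}$, since otherwise a direction $u$ with $\langle u, v_H\rangle > 0$ for all $H \in T$ would decrease every tight distance simultaneously.

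By Carathéodory in $\mathbb{R}^2$, there is $\mathcal{H}' \subseteq T$ with $|\mathcal{H}'| \in \{2,3\}$ and $0 \in \mathrm{conv}\{v_H : H \in \mathcal{H}'\}$, which can be chosen so that $0$ lies in the relative interior of this convex hull. Then $c$ is a geodesic center of $\mathcal{H}'$ with radius $r$, and $C_{\mathcal{H}} \subseteq C_{\mathcal{H}'}$ since $r(c^*,\mathcal{H}') \leq r(c^*,\mathcal{H})$ for every $c^*$. I now case-split on $|\mathcal{H}'|$. If $|\mathcal{H}'| = 3$, then $0$ lies in the interior of the triangle formed by the three $v_H$, so for every nonzero direction $u$ some $v_H$ satisfies $\langle u, v_H\rangle < 0$; thus $d(\cdot, H)$ strictly increases along $u$ to first order, and convexity globalizes this to $C_{\mathcal{H}'} = \{c\}$, hence $C_{\mathcal{H}} = \{c\}$. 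If $\mathcal{H}' = \{H_1, H_2\}$, then $v_{H_1} = -v_{H_2}$; movement along this common axis strictly decreases one distance and increases the other to first order, so any other minimizer of $\mathcal{H}'$ lies on the line $\ell$ through $c$ perpendicular to $v_{H_1}$. A second-order expansion shows that each $d(\cdot, H_i)$ has strictly positive second derivative along $\ell$ at $c$ unless $\pi(c, H_i)$ is a straight segment meeting the chord of $H_i$ orthogonally. If this condition fails for either $i$, then $\max_i d(\cdot, H_i) > r$ along $\ell$ away from $c$, giving $C_{\mathcal{H}'} = \{c\} = C_{\mathcal{H}}$. Otherwise, both chords are perpendicular to $v_{H_1}$, hence parallel; each $d(\cdot, H_i)$ reduces to the Euclidean perpendicular distance to its chord, constant along $\ell$; and the locus $d(\cdot, H_1) = d(\cdot, H_2) = r$ is a segment of $\ell$ on the midline parallel to and midway between the two chords, coinciding with $C_{\mathcal{H}'}$. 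A short additional argument, applying the first-order condition at an interior point of this segment, rules out the possibility that some third tight half-polygon from $\mathcal{H}$ shrinks $C_{\mathcal{H}}$ to a proper subsegment, giving $C_{\mathcal{H}} = C_{\mathcal{H}'}$ in this case as well.

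The main obstacle is the second-order analysis in the two-vector case: quantifying the curvature contribution of $d(\cdot, H_i)$ along $\ell$ when $\pi(c, H_i)$ bends at a reflex vertex, or when the foot of the perpendicular from $c$ lies at an endpoint rather than in the interior of the chord. These are routine but somewhat technical Euclidean calculations, and they must yield strict positivity in every non-degenerate configuration in order to isolate the single case (straight orthogonal shortest paths to parallel chords) where a segment of centers occurs.
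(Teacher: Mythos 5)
Your approach is genuinely different from the paper's. The paper works structurally with the geodesic center set $C := \{x \in P : d(x,H) \le r \text{ for all } H\}$: it shows $C$ is geodesically convex, and if $C$ contains two points, the geodesic between them must be a segment on which some distance function is constant, forcing parallel chords (Case~2); if $C$ is a single point, it argues that point is cut out by the boundaries of at most three of the balls $\{d(\cdot,H)\le r\}$. You instead argue variationally via a first-order stationarity condition and Carath\'eodory, which is a cleaner and potentially more illuminating route when it works.

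However, there is a concrete gap in the pivotal equivalence ``$c \in C_{\mathcal{H}}$ if and only if $0 \in \mathrm{conv}\{v_H : H \in T\}$.'' This is only valid if every direction $u$ with $\langle u, v_H\rangle > 0$ for all $H \in T$ is actually a \emph{feasible} direction, i.e.\ stays in $P$ near $c$. If $c$ lies at a reflex vertex of $\partial P$, the descent cone can point entirely into the exterior of the polygon, and then $c$ is a geodesic center even though $0 \notin \mathrm{conv}\{v_H\}$. For example, at a reflex vertex with interior angle $3\pi/2$ and exactly two tight half-polygons whose first-segment directions lie along the two incident polygon edges, the vectors span an angle of only $\pi/2$ on the exterior side, so $0$ is not in their convex hull; yet every feasible perturbation increases the radius, so $c$ is the (unique) center. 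This is exactly the scenario the paper flags in Lemma~\ref{lemma:test_for_global_center}, where the wedge $W_\alpha$ is required to lie in $P$ and a boundary center is determined by only two vectors. Your Carath\'eodory step cannot even be invoked in this case, since its hypothesis $0 \in \mathrm{conv}\{v_H\}$ fails, so no set $\mathcal{H}'$ is produced. To repair the argument you would need to replace the convex-hull condition by the wedge-in-$P$ condition (or, equivalently, by $0 \in \mathrm{conv}(\{v_H\}) + N$ with $N$ the appropriate cone accounting for $\partial P$), and then re-derive a Carath\'eodory-style bound that yields two half-polygons in the boundary case. The second-order details you flag in the two-vector case are a smaller concern and should indeed go through, but the reflex-vertex issue is a genuine missing case, not merely a technicality.
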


\changed{The proof of this claim depends on a basic convexity property of the geodesic distance function that was proved for the case of distance to a vertex by  
Pollack et al.~\cite[Lemma 1]{pollack_sharir} and that we extend to a half-polygon.
}

\medskip

\noindent

\medskip

\begin{figure}[H]
\centering
\includegraphics[width=0.7\textwidth]{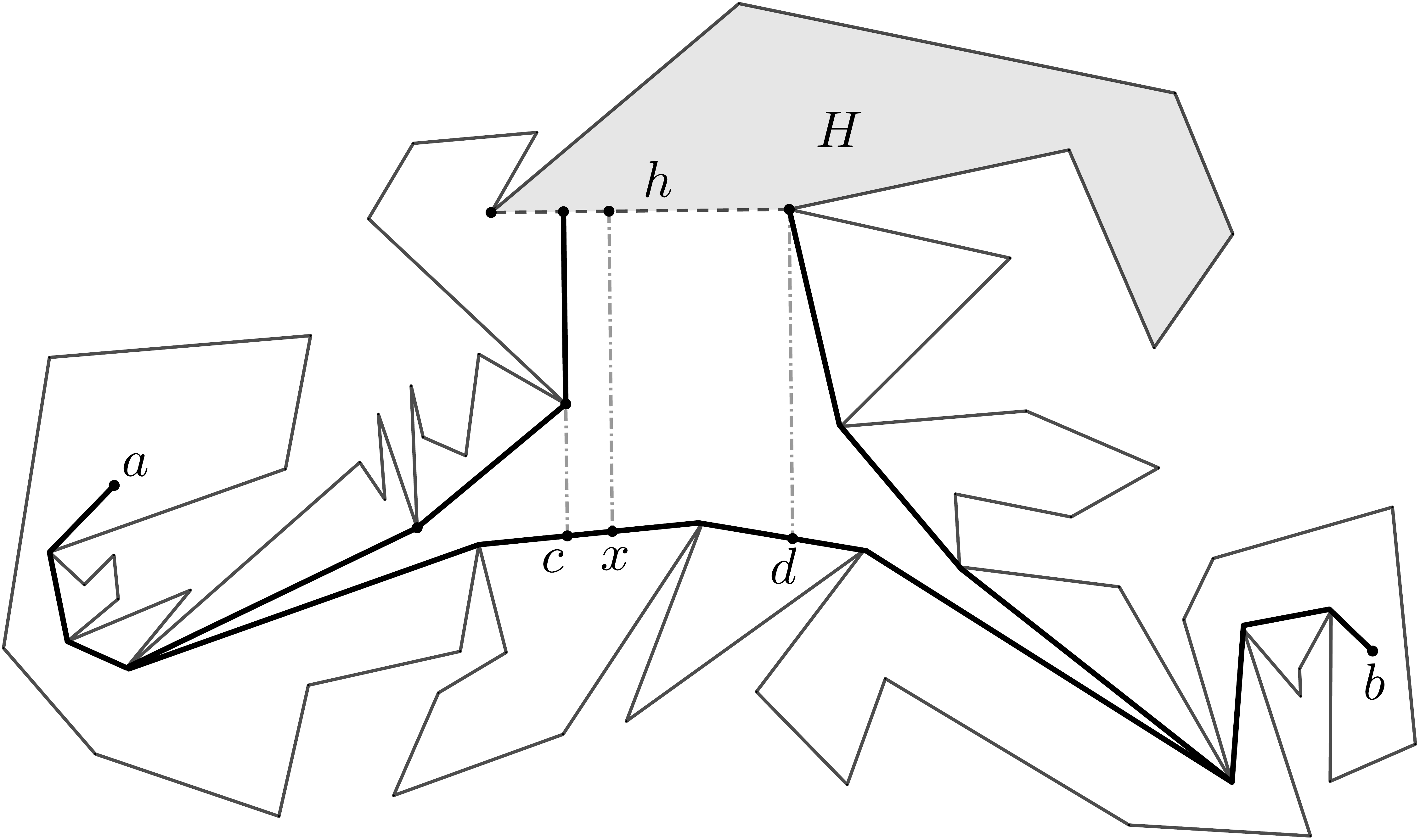}
\caption{
Proving that 
\revised{$d(x,H)$} is a convex function of $d(a,x)$. 
}
\label{fig:convexity-proof}
\end{figure}

\newestchanged{A subset $Q$ of  $P$ is \defn{geodesically convex} if for any two points $a$ and $b$ in $Q$, the shortest (or ``geodesic'') path $\pi(a,b)$ in $P$ is contained in $Q$.
A function $f$ defined on $P$ is \defn{geodesically convex} if $f$ is convex on every geodesic path $\pi(a,b)$ in $P$, i.e., for points $x \in \pi(a,b)$, $f(x)$ is a convex function of $d(a,x)$.
}

\begin{lemma}
\label{lemma:distance_to_half_polygon_is_convex}
\newestchanged{For any half-polygon $H$, the distance function $d(x,H)$ is geodesically convex.
Furthermore, on any geodesic path $\pi(a,b)$ with $a$ and $b$ outside $H$, the minimum of $d(x,H)$ occurs at a point or along a line segment parallel to $h$, the defining chord of $H$.}
\end{lemma}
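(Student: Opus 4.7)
The plan is to adapt the unfolding argument that Pollack, Sharir, and Rote~\cite{pollack_sharir} use to prove geodesic convexity of the distance to a single vertex, extending it to the distance to a half-polygon.

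I would first make two reductions. Since the chord $h$ separates $P$ into $H$ and its complementary half-polygon, any path from $x \notin H$ to a point of $H$ must cross $h$, so $d(x, H) = d(x, h)$ for $x \notin H$, while $d(x, H) = 0$ for $x \in H$. Furthermore, any geodesic $\pi(a, b)$ crosses the chord $h$ at most twice --- otherwise three intersection points in order along $\pi(a, b)$ would let us replace the sub-path between the outermost two by the straight segment of $h$ between them, shortcutting $\pi(a, b)$ and contradicting minimality. Thus $\pi(a, b) \cap H$ is a single (possibly empty) sub-interval on which $d(\cdot, H) = 0$, and it suffices to prove convexity of $f(t) := d(x(t), h)$ on the complementary outside sub-intervals and glue using $f \geq 0$.

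For the main case where $\pi(a, b)$ is disjoint from $H$, I would argue by unfolding. For each $x \in \pi(a, b)$ let $y(x) \in h$ realize $d(x, h)$; a standard swap argument shows the shortest paths $\pi(x, y(x))$ are pairwise non-crossing, so they foliate a sub-region $\mathcal{R} \subseteq P$ bounded by $\pi(a, b)$, a sub-segment of $h$ between $y(a)$ and $y(b)$, and the two extreme paths $\pi(a, y(a))$ and $\pi(b, y(b))$. I would then unfold $\mathcal{R}$ in the plane by rotating around the reflex vertices of $\pi(a, b)$ so that the image $\tilde{\pi}$ of $\pi(a, b)$ becomes a straight segment; the unfolding is an isometry on $\mathcal{R}$. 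The image $\tilde{h}$ of the sub-segment of $h$ then turns out to be a polyline convex toward $\tilde{\pi}$, because each straightening rotation at a reflex vertex of $\pi(a, b)$ rotates the $h$-side piece of $\mathcal{R}$ toward the newly straightened line by precisely the exterior angle there. Consequently $f(t)$ equals the Euclidean distance from $\tilde{x}(t) \in \tilde{\pi}$ to the convex polyline $\tilde{h}$, which is a convex function of $t$.

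For the minimum-location statement, the Euclidean distance from a moving point on a line to a polyline convex toward the line is minimized at a unique point unless the closest point on the polyline lies on an edge parallel to the line, in which case the minimum is attained on a sub-interval. In our setting, the convex polyline $\tilde{h}$ contains at most one edge parallel to $\tilde{\pi}$ (convex polylines cannot zig-zag back to the same slope), and that edge is the image of an unbroken sub-segment of the actual chord $h$; the corresponding minimizing sub-interval of $\pi(a, b)$ therefore lies within a single cell of the unfolding, so it is a straight, unrotated sub-segment of $\pi(a, b)$, parallel to $h$ in the original polygon. The main obstacle will be justifying the convex-toward-$\tilde{\pi}$ shape of $\tilde{h}$: this requires an accumulation-of-angles argument showing that each unfolding rotation about a reflex vertex rotates the adjacent $h$-side piece consistently toward $\tilde{\pi}$, which extends the single-target angular accumulation step used in~\cite{pollack_sharir}.
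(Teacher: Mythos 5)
Your approach is genuinely different from the paper's. The paper avoids unfolding entirely: it partitions $\pi(a,b)$ (after reducing to $a,b\notin H$) into three pieces $\pi(a,c)$, $\pi(c,d)$, $\pi(d,b)$ such that on the outer two pieces the shortest path to $H$ has a \emph{fixed} terminal $h_1$ or $h_2$, and on the middle piece the path is a single perpendicular segment to $h$. The outer pieces are then handled by a direct application of Pollack--Sharir--Rote's convexity lemma for a fixed point target, the middle piece is handled by a short direct argument using the convexity of $\pi(c,d)$ relative to $h$, and the three convex pieces are checked to glue convexly. So the decomposition is by the combinatorial type of $\pi(x,H)$ rather than by unfolding cells, and the heavy lifting is delegated to PSR's lemma as a black box.

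Your unfolding sketch is plausible in spirit but has a real gap beyond the one you flag at the end. You straighten only at the reflex vertices of $\pi(a,b)$, and then assert $f(t)$ equals the Euclidean distance from $\tilde x(t)$ to the image $\tilde h$ of the sub-chord $[y(a),y(b)]$. But for $x$ in the regimes where $\pi(x,H)$ terminates at $h_1$ or $h_2$, that shortest path typically wraps around reflex vertices lying on the side boundaries $\pi(a,y(a))$ and $\pi(b,y(b))$ of $\mathcal R$, and your unfolding does not straighten those. Consequently the geodesic distance in $\tilde{\mathcal R}$ from $\tilde x$ to $\tilde h$ is not the Euclidean distance to $\tilde h$; the straight segment realizing the Euclidean distance will in general leave $\tilde{\mathcal R}$. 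To make the unfolding work you would need the ``effective target curve'' to consist of the cell-by-cell images of $h_1$ (for $x\in\pi(a,c)$), then the image of the perpendicular feet along $[h_1,h_2]$, then the cell-by-cell images of $h_2$, and you would need to prove that this composite polyline is convex toward $\tilde\pi$; this is precisely where the fixed-target PSR lemma (for the images of $h_1$ and $h_2$) and the middle-piece argument reappear, and it is substantially more than the single accumulation-of-angles step you describe. There is also the secondary issue that straightening $\pi(a,b)$ at a vertex whose $\mathcal R$-side angle exceeds $\pi$ folds $\mathcal R$ onto itself, so ``the unfolding is an isometry on $\mathcal R$'' is not literally true and one must reason about unfolded distances with possible overlap. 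None of this is fatal to the overall plan, but as written the key identity $f(t)=d_2(\tilde x(t),\tilde h)$ is false, and repairing it essentially recreates the paper's three-case decomposition.
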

\begin{proof}
Pollack et al.~\cite{pollack_sharir} proved the version of this where $H$ is replaced by a point\changed{---in particular, they proved that the distance function is strictly convex which implies that the minimum occurs at a point}.

If $a$ and $b$ are inside $H$, then so is $\pi(a,b)$ and the distance function is constantly 0.  So suppose $a \notin H$.  If the path $\pi(a,b)$ intersects $H$, then it does so at only one point $b^* \in h$ (otherwise $h$ provides a shortcut for part of the path).  
\changed{Then $\pi(b^*,b)$ lies inside $H$ and it suffices to prove convexity 
for the path $\pi(a,b^*)$.}  In other words, we may assume that $b$ is not in the interior of $H$.

The shortest paths $\pi(x,H)$ for $x \in \pi(a,b)$ reach a subinterval $[h_1,h_2]$ of $h$. 
See Figure~\ref{fig:convexity-proof}.  In case this subinterval is a single point, i.e., $h_1=h_2$, the convexity result of Pollack et al.~proves the lemma.
Otherwise, since shortest paths do not cross,  there are points $c$ and $d$ on $\pi(a,b)$ such that for $x \in \pi(a,c)$ the path $\pi(x,H)$ arrives at $h_1$,  for $x \in \pi(c,d)$ the path $\pi(x,H)$ is a straight line segment reaching $h$ at a right angle, and for $x \in \pi(d,b)$ the path $\pi(x,H)$ arrives at $h_2$.  
The convexity result of Pollack et al.~applies to the paths arriving at the points $h_1$ and $h_2$.  

It remains to consider $x \in \pi(c,d)$. 
As $x$ moves along $\pi(c,d)$, the endpoint of $\pi(x,H)$ moves continuously with a one-to-one mapping along the segment $[h_1,h_2]$.  Since the curve $\pi(c,d)$ is convex, this implies that  $d(x,H)$ is a convex function of $d(a,x)$ for $x \in \pi(c,d)$. 
\changed{Furthermore, the minimum is unique unless a segment of the geodesic is parallel to $h$.
}

Finally, one can verify that convexity holds at the points $c$ and $d$, i.e., that the three convex functions join to form a single convex function. 
\end{proof}

\newestchanged{
Because the intersection of geodesically convex sets is geodesically convex, and the max of geodesically convex functions is geodesically convex, we get the following consequences.

\begin{corollary}
\label{cor:geodesic-convexity}
The geodesic radius function $r(x,{\cal H})$ is geodesically convex.
The geodesic ball $B(t,H) := \{x \in P : d(x,H) \le t \}$ for any half-polygon $H$, and the geodesic ball $B(t) :=\{x \in P : r(x,{\cal H}) \le t \}$  are geodesically convex. 
\end{corollary}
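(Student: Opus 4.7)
The plan is to reduce everything to Lemma~\ref{lemma:distance_to_half_polygon_is_convex} together with the two stated closure properties: intersections of geodesically convex sets are geodesically convex, and pointwise maxima of geodesically convex functions are geodesically convex.

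First I would handle the radius function. By Lemma~\ref{lemma:distance_to_half_polygon_is_convex}, for each $H \in {\cal H}$ the function $d(x,H)$ is geodesically convex, meaning that along any geodesic $\pi(a,b)$ the restriction $t \mapsto d(\gamma(t),H)$ is a convex function of arc-length $t = d(a,\gamma(t))$. Since $r(x,{\cal H}) = \max_{H \in {\cal H}} d(x,H)$ and a pointwise maximum of convex functions of one real variable is convex, the restriction $t \mapsto r(\gamma(t),{\cal H})$ is convex on every geodesic $\pi(a,b)$, so $r(x,{\cal H})$ is geodesically convex.

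Next I would argue that each geodesic ball $B(t,H)$ is geodesically convex. Take $a,b \in B(t,H)$, so that $d(a,H), d(b,H) \le t$, and let $x \in \pi(a,b)$. By Lemma~\ref{lemma:distance_to_half_polygon_is_convex}, $d(\cdot,H)$ restricted to $\pi(a,b)$ is a convex function of arc-length, so its maximum over the segment $\pi(a,b)$ is attained at one of the endpoints. Hence $d(x,H) \le \max\{d(a,H),d(b,H)\} \le t$, giving $x \in B(t,H)$ and so $\pi(a,b) \subseteq B(t,H)$. Finally, $B(t) = \{x : r(x,{\cal H}) \le t\} = \bigcap_{H \in {\cal H}} B(t,H)$, which is geodesically convex as an intersection of geodesically convex sets.

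There isn't really a hard step here; the only thing worth double-checking is that the one-dimensional notion of convexity along geodesics interacts correctly with both $\max$ and intersection, but both are immediate from their one-variable analogues applied along each path $\pi(a,b)$.
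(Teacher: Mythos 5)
Your proposal is correct and follows exactly the route the paper takes, which is to cite Lemma~\ref{lemma:distance_to_half_polygon_is_convex} and then appeal to the two closure properties (max of geodesically convex functions, intersection of geodesically convex sets); the paper states these closure facts in a single sentence before the corollary and leaves the routine verification implicit. Your write-up just makes those one-variable convexity arguments along each geodesic explicit, which is a faithful expansion of the same proof.
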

}

\begin{proof}[Proof of Claim~\ref{claim:2-or-3-vectors}]
\changed{
The set of geodesic centers is $C := \{x \in P : d(x,H) \le r \text{ for all } H \in {\cal H} \}$.
\newestchanged{By Corollary~\ref{cor:geodesic-convexity}, $C$ is geodesically convex.}
If $C$ contains two distinct points $x_1$ and $x_2$ then it contains the geodesic path $\gamma = \pi(x_1,x_2)$.  
By Lemma~\ref{lemma:distance_to_half_polygon_is_convex}, along $\gamma$, for each $H \in {\cal H}$, the minimum of the distance function $d(x,H)$ occurs at a point or along a line segment parallel to $h$.  This implies that $\gamma$ can only be a single line segment parallel to two of the half-polygons of $\cal H$, which is Case 2 of the Claim.

For Case 1, let us now suppose that 
$C$ consists of a single point.
Because the boundary of each
geodesic ball $\{x \in P : d(x,H) \le r \}$ consists of circular arcs and line segments, the single point $C$
is uniquely determined as the intersection of some circular arcs and line segments, and three of those suffice to determine the point.
}
\end{proof}

\section{Reducing the Visibility Center to the Center of Half-Polygons}
\label{section:essential-half-polygons}
In this section we reduce the problem of finding the visibility center of a set of points $U$ in a polygon $P$ to the problem of finding the geodesic center of a linear number of ``essential'' half-polygons  $\cal H$, which is solved in Section~\ref{section:half-polygon-center}.

By Observation~\ref{obs:vis-to-half-polygon} (and see Figure~\ref{fig:figure_with_visibility_center_paths}) 
the visibility center of $U$ 
is the geodesic center of 
the set of $O(mn)$ half-polygons $H(u,r)$ where $u \in U$, $r$ is a reflex vertex of $P$ that sees $u$, and $H(u,r)$ is the half-polygon containing $u$ and bounded by the chord that extends
$\overrightarrow{ur}$ from $r$ until it hits  $\partial P$ at a point $t$.
Note that finding $t$ is a ray shooting problem and costs $O(\log n)$ time after an $O(n)$ time preprocessing step~\cite{hershberger1995pedestrian}. 

However,  this set of half-polygons is too large.  
We will find a set $\cal H$ of $O(n)$ ``essential'' half-polygons that suffice, i.e., such that 
the visibility center of $U$ is the geodesic center of the half polygons of $\cal H$.
In fact, we give two possible sets of essential half-polygons, ${\cal H}_{\rm reflex}$ and ${\cal H}_{\rm hull}$, where the latter set can be found more efficiently.  \changed{Although the bottleneck is still the algorithm for geodesic center of half-polygons, it seems worthwhile to optimize the reduction.}

We first observe that any half-polygon that contains another one is redundant. 
For example, in Figure~\ref{fig:figure_with_visibility_center_paths},
$H(u_4,r_4)$ is redundant because it contains $H(u_5,r_4)$.
At each reflex vertex $r$ of $P$, there are at most two minimal half-polygons $H(u,r)$. 
Define ${\cal H}_{\rm reflex}$ to be this set of minimal half-polygons. Note that ${\cal H}_{\rm reflex}$ has size $O(n_r)$ where  $n_r$ is the number of reflex vertices of $P$. 

Observe that for the case of finding the visibility center of 
\emph{all} points of $P$, ${\cal H}_{\rm reflex}$ consists of the half-polygons $H(v,r)$ where $(v,r)$ is an edge of $P$, so ${\cal H}_{\rm reflex}$ can be found in time $O(n +  n_r\log n)$.

For a point set $U$, 
the set ${\cal H}_{\rm reflex}$ was also used by 
Ke and O'Rourke~\cite{ke1989computing} in their algorithm
to compute the kernel of point set $U$ in polygon $P$.  (Recall from the Introduction that the kernel of $U$ is the set of points in $P$ that see all points of $U$.)
They gave a sweep line algorithm 
(``Algorithm 2'') to find  
${\cal H}_{\rm reflex}$  in time $O((n+m) \log (n+m))$. 
To summarize:

\begin{proposition}
The geodesic center of ${\cal H}_{\rm reflex}$ is the visibility center of $U$. 
Furthermore, ${\cal H}_{\rm reflex}$ can be found in time $O((n+m) \log (n+m))$. 
\end{proposition}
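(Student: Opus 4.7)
The plan is to establish the two halves of the proposition separately, essentially by assembling the pieces already developed in the preceding paragraphs.

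For the equality of centers, first invoke Observation~\ref{obs:vis-to-half-polygon} to rewrite the visibility radius as $r_V(x,U) = \max_{u \in U} d_V(x,u) = \max_{u,r} d(x,H(u,r))$, where $r$ ranges over reflex vertices of $P$ that can serve as the last reflex vertex on a shortest path from some candidate center to $u$. Thus the visibility center is the geodesic center of the full family ${\cal H}_{\rm all} = \{H(u,r)\}$. The key monotonicity is that $H \supseteq H'$ implies $d(x,H) \le d(x,H')$ for every $x \in P$, because any point realizing $d(x,H')$ also lies in $H$. Consequently, inside the max defining $r_V(\cdot,U)$, any half-polygon that strictly contains another in the family is dominated and may be deleted without altering $r_V(\cdot,U)$ or its minimizer. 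The geodesic center of the inclusion-minimal members of ${\cal H}_{\rm all}$ is therefore the visibility center of $U$.

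Next I would argue that at each reflex vertex $r$ at most two minimal half-polygons survive. All chords defining half-polygons of the form $H(u,r)$ pass through $r$ and lie in the reflex cone at $r$. Rotating a chord about $r$ through one of the two angular ranges delimited by the polygon edges at $r$ produces a family of half-polygons $H(u,r)$ (all chosen to contain the relevant $u$ on a single side of $r$) that is linearly ordered by inclusion, since the ``containing $u$'' side shrinks monotonically as the chord rotates toward the extremal direction. Thus each angular range contributes at most one minimal element, giving the bound $|{\cal H}_{\rm reflex}| = O(n_r)$ and matching the set of minimal half-polygons.

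For the running time, rather than re-develop a sweep, I would invoke Ke and O'Rourke~\cite{ke1989computing} directly: their Algorithm~2 was designed to test whether the point set $U$ has a non-empty kernel and in doing so it constructs exactly the minimal half-polygons at all reflex vertices of $P$; the stated $O((n+m)\log(n+m))$ bound is their analysis. Combined with the equality of centers above, this proves the proposition.

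The main technical obstacle is the nesting claim inside a single angular range at a reflex vertex: one must verify carefully, using the convention that $H(u,r)$ is the half-polygon containing $u$ and the fact that $u$ lies on the far side of $r$ from the prospective center, that pivoting the chord about $r$ produces a monotone chain of half-polygons. Everything else is a direct appeal to earlier material and to~\cite{ke1989computing}.
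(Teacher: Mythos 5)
Your proposal is correct and follows essentially the same route as the paper: the paper states this proposition as a summary of the immediately preceding discussion (Observation~\ref{obs:vis-to-half-polygon}, the redundancy of containing half-polygons, the bound of two minimal half-polygons per reflex vertex, and the reference to Ke and O'Rourke's Algorithm~2) and gives no separate proof. Your write-up fills in the same ingredients in the same order, including the monotonicity $H \supseteq H'\Rightarrow d(x,H)\le d(x,H')$ and the nesting-by-rotation argument for the ``at most two'' bound; the only slight imprecision is that ${\cal H}_{\rm reflex}$ is the set of per-reflex-vertex minimal half-polygons rather than the globally inclusion-minimal ones, but since ${\cal H}_{\rm reflex}$ contains every globally minimal half-polygon and only adds dominated ones, the conclusion about the geodesic center is unaffected.
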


In the remainder of this section 
we present a second approach using ${\cal H}_{\rm hull}$ that eliminates the $O(n \log n)$ term. 
\newcomment{This does not change the runtime to find the visibility center, but it means that improving the algorithm to find the geodesic center of half-polygons will automatically improve the visibility center algorithm.}
The idea is that ${\cal H}_{\rm reflex}$ is wasteful in that a single point $u \in U$ can give rise to $n_r$ half-polygons. 
Note that we really only need three half-polygons in an essential set, though the trouble is to find them!

\newcomment{We first eliminate the case where the kernel of $U$ is non-empty (i.e., $r_V=0$) by running the $O(n + m \log (n+m))$ time kernel-finding algorithm of Ke and O'Rourke~\cite{ke1989computing}.}
Next we find ${\cal H}_{\rm hull}$ in two steps.  First make a subset ${\cal H}_0$ as follows.  
Construct $R$,  the geodesic convex hull of $U$ in $P$ in time $O(n + m \log(m+n))$~\cite{guibas1989optimal,toussaint1989computing}. 
For each edge $(u,r)$ of $R$ where $u \in U$ and $r$ is a reflex vertex of $P$, put $H(u,r)$ into ${\cal H}_0$.
Note that ${\cal H}_0$ has size $O(\min\{n_r,m\})$ so ray shooting to find the endpoints of the chords $H(u,r)$ takes time $O(n + \min\{n_r,m\} \log n)$.
Unfortunately, as shown in Figure~\ref{fig:first_phase_relCH}, ${\cal H}_0$
can miss an essential half-polygon.

Next, construct a geodesic center $c_0$ of ${\cal H}_0$ using the algorithm of Section~\ref{section:half-polygon-center}.
\newcomment{(Note that the geodesic center can be non-unique and in such cases $c_0$ denotes any one point from the set of geodesic centers.)}
Then repeat the above step for $U \cup \{c_0\}$, %
\newcomment{more precisely,} %
construct $R'$, the geodesic convex hull of  $U \cup \{c_0\}$ in $P$ and 
for each edge $(u,r)$ of $R'$  where $u \in U$ and $r$ is a reflex vertex of $P$, add ${H}(u,r)$ to ${\cal H}_0$. 
This defines ${\cal H}_{\rm hull}$.
Again, ${\cal H}_{\rm hull}$ has size 
$O(\min\{n_r,m\})$ and ray shooting costs $O(n + \min\{n_r,m\} \log n)$.

\begin{figure}[h]
\centering
\includegraphics[width=0.51\textwidth]{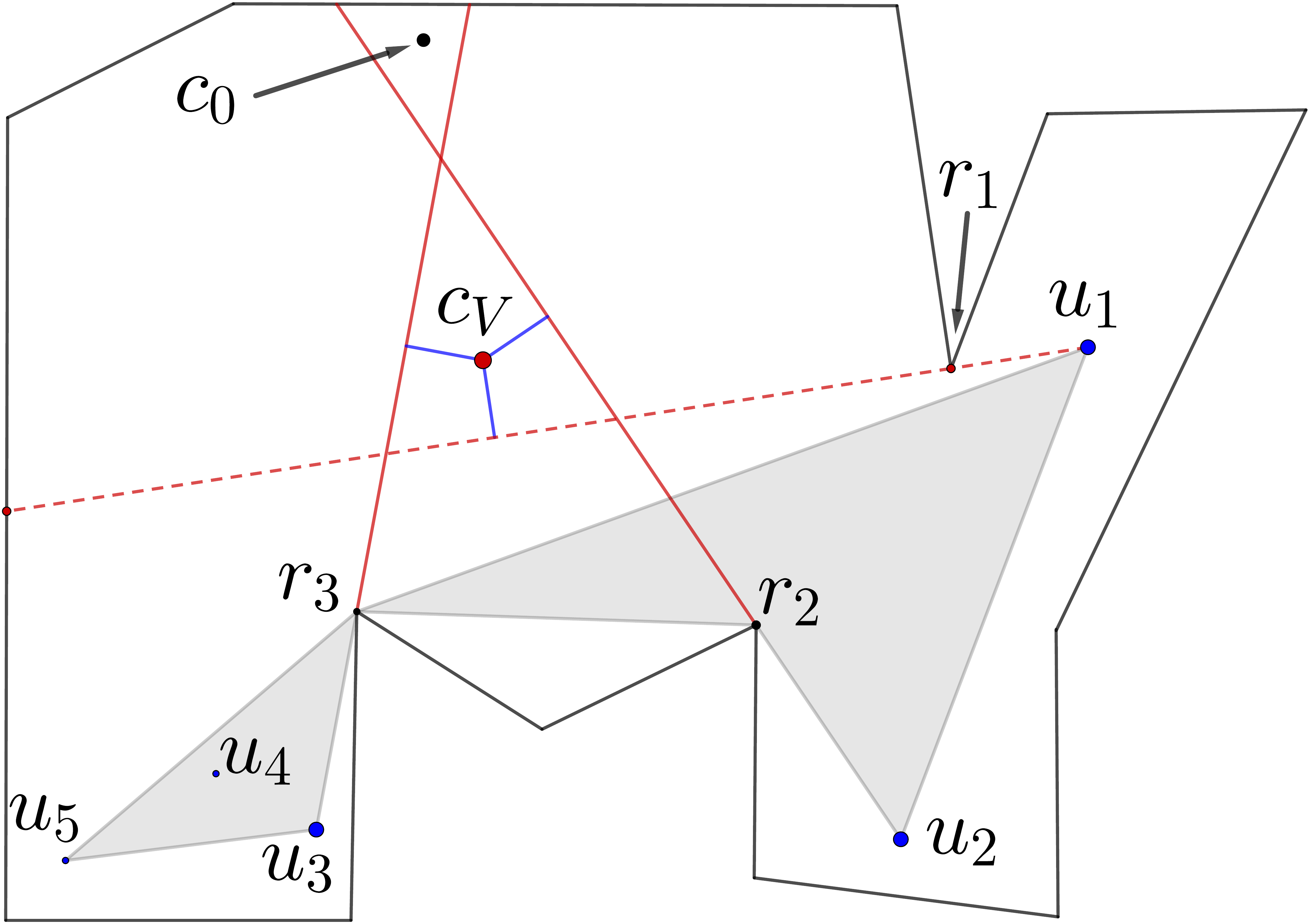}
\caption{
The geodesic convex hull of $U=\{u_1, \ldots, u_5\}$ is shaded grey. ${\cal H}_0$ consists of the two half-polygons $H(u_2,r_2)$ and $H(u_3,r_3)$ (with solid red chords), but misses $H(u_1, r_1)$, which is essential for the visibility center $c_V$.
The point $c_0$ denotes a geodesic center of  ${\cal H}_0$.
}
\label{fig:first_phase_relCH}
\end{figure}

\begin{theorem}
\label{thm:essential-half-polygons}
\newcomment{Suppose the kernel of $U$ is empty. Then the}
geodesic center of ${\cal H}_{\rm hull}$ is the visibility center 
of $U$.
Furthermore ${\cal H}_{\rm hull}$ can be found in time $O(n + m \log (n+m))$
plus the time to find the geodesic center of $O(\min\{n_r,m\})$ half-polygons.
\end{theorem}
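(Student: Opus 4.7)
The theorem has two parts: correctness of the reduction (the geodesic center of ${\cal H}_{\rm hull}$ equals the visibility center of $U$), and the claimed runtime. The runtime part is largely bookkeeping; the correctness is where the work lies.

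For correctness, since the kernel of $U$ is empty we have $r_V > 0$, and Claim~\ref{claim:2-or-3-vectors} applied to ${\cal H}_{\rm reflex}$ yields a subset $T \subseteq {\cal H}_{\rm reflex}$ of at most three ``tight'' half-polygons whose geodesic center is precisely $c_V$. It therefore suffices to prove $T \subseteq {\cal H}_{\rm hull}$, for then the geodesic center of ${\cal H}_{\rm hull}$ coincides with that of $T$, which is $c_V$. Fix a tight half-polygon $H^* = H(u^*, r^*) \in T$ with $r^*$ the last reflex vertex on $\pi(c_V, u^*)$; the goal is to show $H^* \in {\cal H}_{\rm hull}$ by exhibiting $(u^*, r^*)$ as an edge of either $R = \text{GeodConvHull}(U)$ or $R' = \text{GeodConvHull}(U \cup \{c_0\})$.

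As a preparatory step I would prove that for any point $c \in P \setminus H^*$ whose shortest path to $u^*$ ends at the reflex vertex $r^*$, the segment $(r^*, u^*)$ is an edge of $\text{GeodConvHull}(U \cup \{c\})$: the path $\pi(c, u^*)$ lies in that hull by geodesic convexity, $r^* \in \partial P$ automatically sits on the hull boundary, and no other point of $U$ can ``block'' the terminal segment $(r^*, u^*)$ on the boundary because any such blocker would lie inside $H^*$ and would itself give a tighter half-polygon at $c_V$, contradicting the choice of $H^*$. Applying this preparatory fact to $c = c_V$ shows that $(u^*, r^*)$ is an edge of $\text{GeodConvHull}(U \cup \{c_V\})$. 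If $c_V \in R$ then this hull equals $R$, and $H^* \in {\cal H}_0 \subseteq {\cal H}_{\rm hull}$. Otherwise, one must transfer the conclusion from $c_V$ to $c_0$: it suffices to show that $r^*$ is still the last reflex vertex on $\pi(c_0, u^*)$, so that the preparatory fact applied to $c = c_0$ exhibits $(u^*, r^*)$ as an edge of $R'$.

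The main obstacle is this transfer step. My plan is to argue by contradiction: if the last reflex vertex on $\pi(c_0, u^*)$ were some $r' \neq r^*$, then $c_0$ would lie on the opposite side of the chord of $H^*$ from $c_V$, i.e., $c_0 \in H^*$. Combining the geodesic convexity of $d(\cdot, H)$ from Lemma~\ref{lemma:distance_to_half_polygon_is_convex} with the fact that the chords bounding ${\cal H}_0$ come from edges of $R$, one can move from $c_0$ along a geodesic toward $c_V$ and obtain a point whose radius with respect to $T$ is strictly smaller than $r_V$, contradicting optimality of $c_V$ on $T$. Making this rigorous will require a careful case analysis of how the chords of $T$ sit relative to $R$, and in particular of the direction along which one moves to exploit the strict convexity of $d(\cdot, H)$. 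For the runtime, all required primitives appear in the paragraphs preceding the statement: the empty-kernel check costs $O(n + m\log(n+m))$~\cite{ke1989computing}; both geodesic convex hulls cost $O(n + m\log(n+m))$~\cite{guibas1989optimal,toussaint1989computing}; and the two rounds of ray shooting for the $O(\min\{n_r, m\})$ chord endpoints cost $O(n)$ preprocessing plus $O(\min\{n_r, m\}\log n)$ queries~\cite{hershberger1995pedestrian}. Summing gives the stated $O(n + m\log(n+m))$ plus one geodesic-center call on the $O(\min\{n_r, m\})$ half-polygons of ${\cal H}_0$.
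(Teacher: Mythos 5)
Your proposal shares the skeleton of the paper's argument---the preparatory fact you state is essentially the paper's Claim~\ref{claim:H0-condition} rephrased (a point $c$ outside $H^*$ whose geodesic to $u^*$ has last reflex vertex $r^*$ forces $u^*r^*$ onto the hull of $U\cup\{c\}$), and the runtime bookkeeping is fine. But the ``transfer step'' you flag as the main obstacle is exactly where the proof breaks, and the missing ingredient is not a careful case analysis but a specific structural lemma that your plan does not supply.

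To rule out $c_0\in H^*$ you propose to move from $c_0$ toward $c_V$ and find a point with radius strictly below $r_V$ with respect to the tight set $T$. For that argument to get off the ground you would need to know $r(c_0,T)\le r_V$, i.e.\ that $c_0$ is already within distance $r_V$ of each tight half-polygon. But $c_0$ is only the geodesic center of $\mathcal{H}_0$, so this follows only for the members of $T$ that happen to lie in $\mathcal{H}_0$. If, say, only one of the three tight half-polygons were in $\mathcal{H}_0$, then $c_0$ would be essentially unconstrained relative to $T$ and no convexity argument along $\pi(c_0,c_V)$ can salvage the claim. The paper closes this hole with Claim~\ref{claim:at_least_two}: it shows that at least two of the tight half-polygons $H_i$ already belong to $\mathcal{H}_0$, by a purely combinatorial analysis of the chords $h_i$ (two disjoint tight half-polygons put each other's $u_i$ into the other's $\overline{H}_j$, and three pairwise-intersecting ones form a triangle whose region structure gives $H_1\cap H_2\cap H_3=\emptyset$). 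Only then can one say $d(c_0,H_2),d(c_0,H_3)\le r_V$, so that $c_0\in H_1$ would make $c_0$ a second point within distance $r_V$ of all of $T$, contradicting the uniqueness of $c_V$ supplied by Claim~\ref{claim:2-or-3-vectors}. That uniqueness argument is also cleaner than your ``move along a geodesic'' sketch: there is no need to exhibit a strictly better point, one simply observes that $c_0$ and $c_V$ cannot both be centers of $T$ because they have different distances to $H_1$.

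A secondary issue: you split on $c_V\in R$ versus $c_V\notin R$, but the paper never needs $c_V$'s position relative to $R$; Claim~\ref{claim:at_least_two} handles membership in $\mathcal{H}_0$ using only points of $U$, which is what $\mathcal{H}_0$ is actually built from. Your preparatory fact applied at $c=c_V$ only tells you $(u^*,r^*)$ is an edge of the hull of $U\cup\{c_V\}$, which is neither $R$ nor $R'$, so it does not directly place $H^*$ in either $\mathcal{H}_0$ or $\mathcal{H}_{\rm hull}$; the detour through $c_V\in R$ papers over this but does not fix it in the $c_V\notin R$ case. In short: you need the ``at least two in $\mathcal{H}_0$'' lemma and the uniqueness-of-$c_V$ argument, and without them the transfer step cannot be made rigorous.
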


\begin{proof}The run-time was analyzed above. 
Consider the visibility center $c_V$.
\newcomment{By assumption, 
$r_V > 0$.} 
We consider the half-polygons $H(u,r) \in {\cal H}_{\rm reflex}$ such that 
$r_V = d(c_V, H(u,r))$.  
By Claim~\ref{claim:2-or-3-vectors} either there 
are three of these half-polygons, \newcomment{$H_1$, $H_2$ and $H_3$,} that uniquely determine $c_V$, or there are two, \newcomment{$H_1$ and $H_2$,} that determine $c_V$.
Then $c_V$ is the geodesic center of $H_i$ 
 $i=1,2,3$ or $i=1,2$ depending on which case we are in.  
 \newcomment{Let $H_i = H(u_i, r_i)$.}

If all the $H_i$'s are in ${\cal H}_0$, we are done. 
We will show that at least two are in ${\cal H}_0$ and the third one (if it exists) is ``caught'' by $c_0$.  See Figure~\ref{fig:first_phase_relCH}.
Let $h_i$ be the chord defining $H_i$ and let $\overline{H}_i$ be the other half-polygon determined by $h_i$.

\begin{claim} 
\label{claim:H0-condition}
If $U$ contains a point in $\overline{H}_i$ then $(u_i,r_i)$ is an edge of $R$ so $H_i \in {\cal H}_0$.
\end{claim}
\begin{proof}
Let $u$ be a point in $\overline{H}_i$.  Observe that 
$\pi(u_i,u)$ contains the segment $u_i r_i$. 
Thus $r_i$ is a vertex of $R$.
Furthermore $u_i r_i$ is an edge of $R$.  (Note that $H_i$ is extreme at $r$ since we picked it from ${\cal H}_{\rm reflex}$.)
Thus $H_i$ is in ${\cal H}_0$.
\end{proof}

\begin{claim}
\label{claim:at_least_two}
At least two of the ${H}_i$'s lie in ${\cal H}_0$.
\end{claim}
\begin{proof}
\newcomment{
First observe  that if two of the half-polygons are disjoint, say $H_i$ and $H_j$, then they lie in ${\cal H}_0$, because 
$u_i \in H_i$ implies $u_i \in \overline{H}_j$ so by Claim~\ref{claim:H0-condition}, $H_i \in {\cal H}_0$, and 
symmetrically, $H_j \in {\cal H}_0$.}

\newcomment{We separate the proof into cases depending on the number of $H_i$'s. 
If there are two then they must be 
disjoint otherwise a point in their intersection would be a visibility center with visibility radius $r_V=0$.
Then by the above observation, they are both in ${\cal H}_0$}. 

It remains to consider the case of three half-polygons.
\newcomment{If two are disjoint, we are done,
so suppose each pair $H_i, H_j$ intersects.}  Then the three 
chords $h_i$
form a triangle.  Furthermore, since $\bigcap {\overline{H}_i}$ is non-empty (it contains $c_V$), the inside of the triangle is $ \bigcap {\overline{H}_i}$.  
Now suppose $H_1 \notin {\cal H}_0$.  
Then by Claim~\ref{claim:H0-condition}, $u_2, u_3 \in H_1$. 
This implies (see Figure~\ref{fig:first_phase_relCH}) that $u_2 \in \overline{H}_3$ and $u_3 \in \overline{H}_2$, so by   Claim~\ref{claim:H0-condition}, $H_2$ and $H_3$ are in ${\cal H}_0$.
\end{proof}

\newcomment{We now complete the proof of the theorem.
We only need to consider the case of three $H_i$'s, where one of them, say $H_1$, is not in ${\cal H}_0$.
}
Our goal is to show that 
$c_0$, the geodesic center of ${\cal H}_0$, lies in $\overline{H}_1$ and thus $H_1$ is in ${\cal H}_{\rm hull}$. 
Let 
$X = \{x \in P : d(x,{H}_2) \le r_V {\rm \ and\ } d(x,{H}_3) \le r_V \}$.  Observe that $c_0 \in X$ (because the radius is non-increasing as we eliminate half-polygons).  Now, $c_V$ is the unique point within distance $r_V$ of the half-polygons ${H}_1$, ${H}_2$ and ${H}_3$.  If $c_0 \in H_1$, then $c_0$'s distance to ${H}_1$  would be 0 
\newcomment{which contradicts the uniqueness property of $c_V$.}
Thus $c_0 \in \overline{H}_1$. 
By the same reasoning as in Claim~\ref{claim:H0-condition}, this implies that $u_1 r_1$ is an edge of $R'$, the geodesic convex hull of $U \cup \{c_0\}$.  Thus 
$H_1$ is in ${\cal H}_{\rm hull}$ by definition of ${\cal H}_{\rm hull}$.
\end{proof}

\section{The Geodesic Center of Half-Polygons}\label{section:half-polygon-center}

In this section, we 
give an algorithm to find
the geodesic center of a set $\mathcal{H}$ of $k$ half-polygons inside an $n$-vertex polygon $P$
\changed{with run time $O((n+k) \log (n+k))$}.
\fchanged{(Note that although we say ``the'' geodesic center, 
it need not be unique, see Claim~\ref{claim:2-or-3-vectors}.)} %
We preprocess by sorting the half-polygons in cyclic order of their first endpoints around $\partial P$ in time $O(k \log k)$.
\changed{We assume that no half-polygon contains another---such irrelevant non-minimal half-polygons can be detected from the sorted order and discarded.}
\fchanged{We also make the general position assumption that no point in $P$ 
has equal non-zero distances 
to more than 
a constant number of 
half-polygons of $\cal H$.}

We follow the  approach that  Pollack et al.~\cite{pollack_sharir} used to find the geodesic center of the vertices of a polygon.  Many  steps of their algorithm rely, in turn, on search algorithms of Megiddo's~\cite{megiddo_linear}.

The main ingredient of the algorithm 
is a linear time \defn{chord oracle} that, given a chord $K=ab$ of the polygon,
finds the \defn{relative geodesic center}, 
$c_K$ (the optimum center point restricted to points on the chord), 
and tells us which side of the chord contains the center. 
We must completely redo the chord oracle %
in order to handle paths to half-polygons instead of vertices, but the main steps are the same.  
Our chord oracle runs in time $O(n+k)$. 
The chord oracle of Pollack et al.~was
used as a black box in subsequent faster algorithms~\cite{linear_time_geodesic}, so we imagine that our version will be an ingredient in any faster algorithm for the geodesic center of half-polygons. %

Using the chord oracle, we again follow the approach of Pollack et al.~to find the geodesic center.
The total run time is $O((n+k) \log (n+k))$.

We 
give a road-map
for the remainder of this section, listing the main steps, \newcomment{which are the same as those of Pollack et al.,}
and 
highlighting the parts that
\newcomment{we must rework.} 

\medskip
\noindent{\bf \S~\ref{section:chord_oracle} A Linear Time Chord Oracle}
\begin{enumerate}
\item 
Test a candidate center point.  \changed{Given the relative geodesic center $c_K$ on chord $K = ab$, is the geodesic center to the left of right of chord $K$?
This test reduces the chord oracle to finding the relative geodesic center, which is done via the following steps.}

\item Find shortest paths from $a$ and from $b$ to all half-polygons. 
The details of this step are novel, because we need shortest paths to half-polygons rather than vertices.  
\item Find a linear number of simple functions defined on $K$ whose upper envelope is the geodesic radius function.
We must redo this from the ground up. 
\item Find the relative center on $K$ (the point that minimizes the geodesic radius function) using Megiddo's technique.
\end{enumerate}

\medskip
\noindent
{\bf \S~\ref{section:center_using_oracle} Finding the Geodesic Center of Half-Polygons}
\begin{enumerate}
\item Use the chord oracle to find
a region of $P$ that contains the center and such that for any half-polygon $H \in \cal H$, all geodesic paths from the region to $H$ are combinatorially the same.
We give a more modern version of this step using epsilon nets.
\item Solve the resulting Euclidean 
\changed{problem of finding}
a smallest disk that contains given disks and
\newestchanged{intersects given half-planes.}
This is new because of the condition about intersecting
\newestchanged{half-planes}.
\end{enumerate}

\subsection{A Linear Time Chord Oracle}
\label{section:chord_oracle}

In this section we give a linear time chord oracle. 
Given a chord $K=ab$ the chord oracle tells us whether the geodesic center of $\cal H$ 
lies to the left, to the right, or on the chord $K$.  
It does this by first finding the relative geodesic center %
$c_K = \argmin \{r(x,{\cal H}): x \in K\}$, together with the half-polygons
that are farthest from $c_K$
\changed{and the first segments of the shortest paths from $x$ to those farthest half-polygons.}
From this information, 
we can identify which side of $K$ contains the geodesic center $c$ in the same way as Pollack 
et al.~by testing the vectors of the first segments of the shortest paths from $c_K$ to its furthest half-polygons.  This test is described in Subsection~\ref{sec:testing_center}. %

The chord oracle thus reduces to the problem of finding the relative geodesic center and its farthest half-polygons.  
The main idea here is to capture the geodesic radius function along the chord  (i.e., the function $r(x,{\cal H})$ for $x \in K$) as the upper envelope of a linear number of \changed{easy-to-compute convex} functions defined on overlapping subintervals of $K$. 
In order to find the %
functions (Section~\ref{section:functions_to_capture}) we first 
compute shortest paths from $a$ and from $b$ to all the half-polygons (Section~\ref{section:shortest_path_tree_half_polygons}).  Finally we apply Megiddo's techniques (Section~\ref{sec:find-rel-center}) to find the point $c_K$ on $K$ that minimizes the geodesic radius function.

\subsubsection{Testing a Candidate Center Point}
\label{sec:testing_center}

\changed{In this section we show how to test in constant time
whether a candidate point is a geodesic center, or relative center, and if not,
in which direction the  center lies. 
The basic idea is that a local optimum is a global optimum, so a local test suffices.
In more detail, the input is a point $x$ on a chord $K=ab$ together with 
\fchanged{its geodesic radius $r(x, {\cal H})$ and}
the first segments of the shortest paths from $x$ to its farthest half-polygons.  The goal is to test in constant time:
(1) whether $x$ is a relative geodesic center of $K$, and if not, which direction to go on $K$ to reach a relative center; and (2) if $x$ is a relative geodesic center,
whether $x$ is a  geodesic center of $P$, and if not, which side of $K$ contains a geodesic center of $P$.
These tests are illustrated in Figure~\ref{fig:master_center_conditions}.}
\fchanged{Note that if $r(x, {\cal H})$ is zero, then 
$x$ is a geodesic center and no further work is required.}

\begin{figure}[h]
\centering
\includegraphics[width=.95\textwidth]{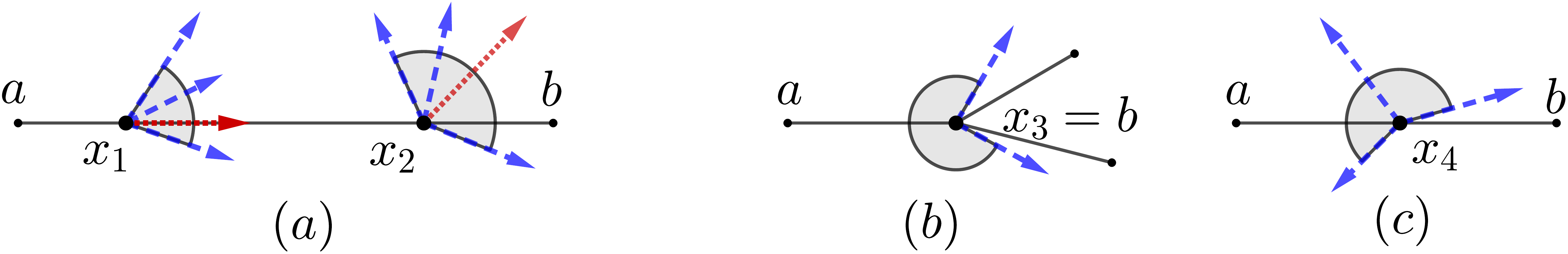}
\caption{
Points $x_i$ on chord $K=ab$ with directions of paths to farthest half-polygons in dashed blue,
wedge $W_{\alpha}$ 
shaded, 
and
the direction for improvement %
in dotted red.
(a) $x_1$ is not a relative center of $K$, whereas
 $x_2$ is a relative center but not the geodesic center of $P$. 
(b,c) $x_3$ (a reflex vertex of the polygon) and $x_4$ are geodesic centers.
}
\label{fig:master_center_conditions}
\end{figure}

\changed{
The tests are accomplished via the following lemma, 
which is analogous to Lemmas 2 and 3 of Pollack et al.~\cite{pollack_sharir}.
}

\begin{lemma}
\label{lemma:test_for_global_center}
Let $x$ be a point on chord $K$, and 
let $V$ be the vectors of the first segments of the shortest paths from $x$ to its farthest half-polygons
\newestchanged{$\cal F$}. 
\revised{Let $\alpha$ be the smallest angle of a wedge $W_\alpha$ with apex $x$ that contains all the vectors of $V$ and such that $W_\alpha$, restricted to a small neighbourhood of $x$, is contained in $P$.}
\begin{enumerate}
\item {\bf Location of the relative center.}
Let $L$ be the line through $x$ perpendicular to $K$. 
If one of the open half-planes determined by $L$ contains $W_\alpha$, 
then $x$ is not a relative center, and all relative centers lie on that side of $L$. 
Otherwise, $x$ is a relative center.

\item {\bf Location of the center.}
Now suppose that $x$ is a relative geodesic center.
\revised{If $\alpha < \pi$ then $x$ is not a geodesic center, and all geodesic centers lie on the side of $K$ that contains the 
\newestchanged{ray bisecting the angle of $W_\alpha$.}
If $\alpha > \pi$, then $x$ is the unique geodesic center, and furthermore, $x$ is determined by two or three vectors of $V$---the two that bound $W_\alpha$, plus one inside $W_\alpha$ unless $x$ is on the boundary of $P$.  Finally, if $\alpha = \pi$ then $x$ is a geodesic center (though not necessarily unique), and furthermore, $x$ is determined by the two vectors of $V$ that bound $W_\alpha$. } %
\end{enumerate}
\end{lemma}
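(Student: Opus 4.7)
The plan is to adapt the arguments of Pollack et al.'s Lemmas~2 and 3 to half-polygons by reducing both parts to a first-order directional-derivative analysis at $x$, then lifting local information to global via the geodesic convexity of $r(\cdot,{\cal H})$ established in Corollary~\ref{cor:geodesic-convexity}. The key local observation is that for each farthest half-polygon $H\in{\cal F}$, the combinatorial shape of $\pi(y,H)$ is constant in a sufficiently small neighbourhood of $x$; hence $d(y,H)=|y-p_H|+\mathrm{const}$, where $p_H$ is the first turning point of $\pi(x,H)$ (a reflex vertex of $P$, or the perpendicular foot on the defining chord of $H$). Differentiating, the right directional derivative of $d(\cdot,H)$ at $x$ in a feasible unit direction $d$ is $-d\cdot v$ for the corresponding unit first-segment vector $v\in V$, so the right derivative of $r$ in direction $d$ equals $-\min_{v\in V}\,d\cdot v$. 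Consequently $d$ strictly decreases $r$ iff $d\cdot v>0$ for every $v\in V$, i.e., iff $d$ lies in the open ``dual wedge'' $W'$ obtained as the intersection of the two open half-planes strictly on the positive side of the extreme vectors of $W_\alpha$. A direct calculation shows $W'$ is a wedge of angular width $\pi-\alpha$ with the same bisector as $W_\alpha$; in particular $W'=\emptyset$ iff $\alpha\ge\pi$.

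For Part 1, the direction $\pm\hat k$ along $K$ lies in $W'$ iff every $v\in V$ has strictly positive component along $\pm\hat k$, equivalently iff $W_\alpha$ is strictly contained in the corresponding open half-plane of $L$. In that case the one-dimensional convex function $r$ restricted to $K$ strictly decreases in the improving direction, so $x$ is not a relative centre and convexity forces every relative centre onto the side of $L$ that contains $W_\alpha$; otherwise neither $\hat k$ nor $-\hat k$ is improving and convexity of $r|_K$ makes $x$ a relative centre. For Part~2 with $\alpha<\pi$, the non-empty $W'$ is bisected by the bisector $\hat b$ of $W_\alpha$; since $W'$ is a convex wedge of width $<\pi$ containing neither $+\hat k$ nor $-\hat k$ (else $x$ would not be a relative centre on $K$), $W'$ cannot straddle $K$ and therefore lies in the half-plane of $K$ containing $\hat b$. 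Geodesic convexity then propagates the conclusion to the full centre set: along the geodesic $\pi(x,c)$ to any centre $c$, the convex profile of $r$ must decrease from $r(x)$ down to $r^{\ast}$, forcing the initial tangent into $\overline{W'}$ and hence $c$ onto the $\hat b$-side of $K$.

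For $\alpha>\pi$, every direction $d$ has some $v\in V$ with $d\cdot v<0$, so the right derivative of $r$ is strictly positive in every direction from $x$; hence $x$ is a strict local minimum and, by convexity, the unique global minimum. The ``two or three vectors'' assertion then follows by applying Claim~\ref{claim:2-or-3-vectors} to the tight farthest half-polygons---three vectors in the generic interior case (the two extreme vectors of $W_\alpha$ together with one interior vector of $V$), and only the two extreme vectors when $x\in\partial P$, since the polygon boundary then supplies the third pinning constraint. For $\alpha=\pi$ the extreme vectors $v_l,v_r$ of $W_\alpha$ are antipodal, the only non-worsening directions at $x$ are the two perpendicular to $v_l$, and Claim~\ref{claim:2-or-3-vectors} applied to $\{H_l,H_r\}$ gives either a unique centre or a segment of centres parallel to $v_l,v_r$---matching the ``not necessarily unique'' alternative. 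The main obstacle will be the careful treatment of degenerate and boundary situations: $x\in\partial P$ (where the feasible cone may restrict $W'$ and also act as an extra pinning constraint), the borderline case where $\pm\hat k$ lies exactly on $\partial W'$ (so first-order improvement vanishes and one must invoke second-order strict convexity of $|y-p_H|$ in directions away from the ray $xp_H$), and the $\alpha=\pi$ case with reflex-vertex turning points (where second-order convexity again forces uniqueness despite the empty $W'$).
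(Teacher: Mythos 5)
Your directional-derivative framework is essentially a cleaner rewriting of the same argument the paper gives: both proofs rest on a local first-order test at $x$ (characterizing improving directions via the vectors $V$) followed by lifting local to global via the geodesic convexity of $r(\cdot,\mathcal{H})$ from Corollary~\ref{cor:geodesic-convexity}. Your dual-wedge $W'$ packages the paper's case distinctions tidily: the three subcases of Part~2 correspond exactly to $W'$ non-empty, degenerate, or empty.

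However, there is a genuine gap in your treatment of the second half of Part~2 for $\alpha<\pi$. You argue that the initial tangent of $\pi(x,c)$ must lie in $\overline{W'}$, ``and hence $c$ lies on the $\hat b$-side of $K$.'' That implication does not follow on its own: a geodesic inside $P$ can start toward the $\hat b$-side and bend back across $K$. To close this you would need to observe that if $\pi(x,c)$ recrossed $K$ at some $y\ne x$, then since $r$ is convex and non-increasing from $x$ to the global minimizer $c$, the relative-center property of $x$ forces $r(y)=r(x)$, hence $r$ constant on $\pi(x,y)$ and zero initial derivative, which contradicts convexity because $r$ must later strictly decrease toward $r(c)<r(x)$. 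The paper sidesteps this entirely by working with $\pi(c,x')$ (where $x'$ is a strictly improving point $x+\epsilon\hat b$): if $c$ were on the wrong side, the crossing point $y$ of $\pi(c,x')$ with $K$ would have $r(y)\le\max(r(c),r(x'))<r(x)$, directly contradicting $x$ being the relative center. That route is both shorter and avoids the tangent-to-location inference.

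Two smaller issues. First, your claim that ``the right derivative of $r$ is strictly positive in every direction'' when $\alpha>\pi$ is not quite right when some direction is exactly orthogonal to an extreme vector; the paper's ``any epsilon movement increases the distance to the half-polygon corresponding to one of the $v_i$'s'' is the correct statement because the increase can be second-order, not first-order. You flag this yourself as an obstacle, so it is an acknowledged incomplete step rather than an unnoticed error. Second, invoking Claim~\ref{claim:2-or-3-vectors} to conclude that $x$ is determined by $v_1,v_2$ (and $v_3$) is slightly off-target: that claim guarantees the existence of two or three determining half-polygons but does not identify them as the ones bounding $W_\alpha$. The paper's proof is more direct here, arguing that every small movement increases the distance to one of the three specific half-polygons, which is what actually pins $x$.
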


\newestchanged{
\begin{proof}
We prove the two parts separately.
\begin{enumerate}
\item 
Suppose $W_\alpha$ lies in an open half-plane determined by $L$ (say, the left side of $L$). 
Then moving $x$ an epsilon distance left along $K$ gives a point with smaller geodesic radius since the distance to any  half-polygon in $\cal F$ decreases, and no other half-polygon becomes a farthest half-polygon.  
Therefore $x$ is not a relative center. Furthermore, because the geodesic radius function $r(x,{\cal H})$ is convex on $K$ (by Corollary~\ref{cor:geodesic-convexity}), the relative center lies to the left on $K$. 

Next suppose $W_\alpha$ does not lie in an open half-plane of $L$. Then any epsilon movement of $x$ along $K$ increases the distance to some half-polygon in $\cal F$, so $x$ is a local minimum on $K$ and therefore $x$ is the relative center (again using the face that the geodesic radius function is convex on $K$).

\item Suppose $\alpha < \pi$.  Let $b$ be the ray that bisects $W_\alpha$.  Moving $x$ an epsilon distance along $b$ gives a point $x'$ with smaller geodesic radius.  Therefore $x$ is not the center. 
Next we prove that the center $c$ lies on the side of $K$ that contains $b$.
Suppose not. Consider the geodesic $\pi(c,x')$.  By Corollary~\ref{cor:geodesic-convexity}, the geodesic radius function is convex on $\pi(c,x')$.  But then the point where the geodesic crosses $K$ has a smaller geodesic radius than $x$, a contradiction to $x$ being the relative center.

Next suppose $\alpha > \pi$. Let $v_1$ and $v_2$ be the two vectors that bound $W_\alpha$. 
If $x$ is on the boundary of $P$ it must be at a reflex vertex of $P$.
Otherwise, 
since no smaller wedge contains $V$, there must be a third vector $v_3$ in $V$, making an angle $< \pi$ with each of $v_1$ and $v_2$.  In either case ($x$ on the boundary of $P$ or not) 
any epsilon movement of $x$ in $P$ increases the distance to the half-polygon corresponding to one of the $v_i$'s. 
Thus $x$ is a local minimum in $P$ and (by geodesic convexity of the radius function) $x$ is the center.  Furthermore, $x$ is determined by $v_1$ and $v_2$---and $v_3$ if $x$ is interior to $P$. 

Finally, suppose $\alpha = \pi$. As in the previous case, $x$ is a geodesic center 
and is determined by the two vectors $v_1$ and $v_2$ of $V$ that bound $W_\alpha$.  Furthermore, $x$ is unique unless the two corresponding half-polygons 
have parallel defining chords, and $v_1$ and $v_2$ reach those chords at right angles.  In this case the set of geodesic centers consists of a line segment through $x$ parallel to the chords.

\end{enumerate}
\end{proof}
}

\subsubsection{Shortest Paths to Half-Polygons}
\label{section:shortest_path_tree_half_polygons}

In this section we give a linear time algorithm to find the shortest path tree from point $a$ on the polygon boundary to all the half-polygons $\cal H$.  Recall that each half-polygon is specified by an ordered pair of endpoints on $\partial P$, and  the half-polygons are sorted in clockwise cyclic order by their first endpoints.  From this, we identify  the half-polygons that contain $a$, and we  discard them---their distance from $a$ is 0.
Let $H_1, \ldots, H_{k'}$ be the remaining half-polygons where $H_i$ is bounded by endpoints $p_i q_i$, and the $H_i$'s are sorted by $p_i$, starting at $a$.  

The idea is to first find the shortest path map $T_a$ from $a$ to the set consisting of the polygon vertices and  the points $p_i$ and $q_i$. Recall that the shortest path map is an augmentation of the shortest path tree that partitions the polygon into triangular regions in which the shortest path from $a$ is combinatorially the same (see Figure~\ref{fig:shortest-paths}).
The shortest path map can be found in linear time~\cite{SPT_linear}.
Note that $T_a$ is embedded in the plane (none of its edges cross) 
and the ordering of its leaves matches their ordering on $\partial P$.
Our algorithm will traverse $T_a$ in depth-first order, and visit the triangular regions along the way.

Our plan is to augment $T_a$ to a shortest path tree $\bar T_a$ that includes the shortest paths from $a$ to each half-polygon $H_i$.  Note that $\bar T_a$ is again an embedded ordered tree.
We can find $\pi(a, H_i)$ by examining the regions of the shortest path map intersected by $p_i q_i$. These lie in 
the \emph{funnel} 
between the shortest paths 
$\pi(a,p_i)$ and $\pi(a, q_i)$. 
Note that edges of the shortest path map $T_a$ may cross the chord $p_i q_i$.  Also, 
the funnels for different half-polygons may overlap.  The key to making the search efficient is the following lemma:

\begin{lemma}
The ordering $H_1, H_2, \ldots, H_{k'}$ matches the ordering of the paths $\pi(a,H_i)$ in the tree $\bar T_a$.  
\end{lemma}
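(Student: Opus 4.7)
The plan is to exploit two facts: (i) shortest paths in a simple polygon do not cross, and (ii) the assumption that no half-polygon contains another constrains how the four endpoints $p_i, q_i, p_j, q_j$ can be cyclically arranged on $\partial P$. Combined, these will force $f_i$, the endpoint of $\pi(a, H_i)$ on the chord $p_i q_i$, to precede $f_j$ in the clockwise leaf order of $\bar T_a$ whenever $i<j$. By transitivity of the leaf order it suffices to prove the claim for consecutive pairs $i, i+1$.

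First I would localize each $f_i$ in the tree. Since $H_i$ lies on the far side of the chord $p_i q_i$ from $a$, the shortest path $\pi(a, H_i)$ ends at a point $f_i$ on $p_i q_i$, and by non-crossing of shortest paths it lies entirely in the closed funnel $F_i$ bounded by $\pi(a, p_i)$, $\pi(a, q_i)$, and the chord. In the planar embedding of $\bar T_a$ this forces the leaf $f_i$ to appear in the leaf order between $p_i$ and $q_i$, namely in the sub-arc corresponding to the clockwise boundary arc from $p_i$ to $q_i$.

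Next I would enumerate the possible cyclic arrangements of $p_i, p_j, q_i, q_j$. Because $a \notin H_i \cup H_j$, we have $p_i \prec q_i$ and $p_j \prec q_j$ in the clockwise order starting from $a$. The nested arrangement $p_i \prec p_j \prec q_j \prec q_i$ would imply $H_j \subset H_i$, since the segment $p_j q_j$ lies entirely in the $H_i$-side half-plane of the chord $p_i q_i$ and hence inside $H_i$; this contradicts the non-containment assumption, and the symmetric nesting is excluded the same way. So only two arrangements remain: the disjoint one $p_i \prec q_i \prec p_j \prec q_j$, and the crossing one $p_i \prec p_j \prec q_i \prec q_j$ (in which a Jordan-curve argument shows that the two chords must cross inside $P$). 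In the disjoint case the claim is immediate: $q_i$ precedes $p_j$ in the leaf order, $f_i$ is before $q_i$ by the funnel localization, and $f_j$ is after $p_j$, so $f_i$ precedes $f_j$.

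The main obstacle is the crossing case, in which both $f_i$ and $f_j$ may fall inside the overlap interval $(p_j, q_i)$ of the leaf order. Here I would argue directly from non-crossing of $\pi(a, f_i)$ and $\pi(a, f_j)$. Let $v$ be their last common node; the relative leaf order of $f_i$ and $f_j$ is determined by the clockwise order, at $v$, of the two outgoing tree edges. To pin this down I would extend each path infinitesimally past its endpoint into the corresponding half-polygon, perpendicular to the chord (treating the degenerate cases $f_i \in \{p_i, q_i\}$ separately). These perturbed, still non-crossing, curves from $a$ virtually reach $\partial P$ at points inside the arcs $(p_i, q_i)$ and $(p_j, q_j)$ respectively; together with the interleaving cyclic order $p_i, p_j, q_i, q_j$, planar non-crossing forces the extension for $f_i$ to reach $\partial P$ clockwise-before the extension for $f_j$, so $f_i$ precedes $f_j$ in $\bar T_a$. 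I expect the bookkeeping of this crossing case---especially the degenerate subcases and the verification that the infinitesimal extensions do not collide near the chord intersection point---to be the main obstacle.
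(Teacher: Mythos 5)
Your overall strategy---funnel localization of the terminals, reduction to two cases (disjoint chords versus crossing chords), and invoking non-crossing of shortest paths---parallels the structure of the paper's proof, and the disjoint case is handled correctly and in the same way.

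The crossing case, however, contains a genuine gap, and it is exactly where the paper does its real work. When $p_i \prec p_j \prec q_i \prec q_j$, both terminal points $f_i$ and $f_j$ can lie close to the chord-intersection point $x$, and the ``virtual'' boundary points you obtain by extending the two paths perpendicularly into $H_i$ and $H_j$ can both land in the overlap arc $(p_j, q_i)$. The statement that ``planar non-crossing forces the extension for $f_i$ to reach $\partial P$ clockwise-before the extension for $f_j$'' is then circular: non-crossing of the extended curves tells you that the cyclic order of their boundary endpoints equals the leaf order in $\bar{T}_a$, but it does not tell you \emph{which} of the two cyclic orders occurs---that is precisely what you are trying to prove. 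Moreover, the full perpendicular rays need not be disjoint from each other or from the other path (they live in the overlapping region $H_i \cap H_j$), so even the non-crossing premise is not free. You flag both issues as ``bookkeeping,'' but they are not bookkeeping; they are the content of the lemma in this case.

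The paper resolves this with a concrete angle argument you do not have. Writing $x$ for the intersection of the two chords and placing $t_j$ on the sub-segment $p_j x$, the path $\pi(a,t_j)$ must cross $p_i q_i$ at some point $z \in p_i x$; the sub-path from $z$ to $t_j$ lies in the cone at $x$ bounded by rays toward $p_i$ and $p_j$, and inside that cone it turns only to the left. Since the arrival angle at $t_j$ is at least $90^\circ$ (perpendicular arrival, or an endpoint), convexity forces the angle at $z$ to be strictly greater than $90^\circ$, which in turn pins $t_i$ on the $p_i$-side of $z$. It is this strict angle inequality---not non-crossing alone---that breaks the symmetry between $f_i$ and $f_j$ inside the overlap arc. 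To repair your proof you would essentially need to rediscover this inequality; the perpendicular-extension device does not substitute for it.
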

\begin{proof} 
Consider two half-polygons $H_i = p_iq_i$ and $H_j = p_jq_j$, with $i < j$.  We prove that $\pi(a,H_i)$ comes before $\pi(a,H_j)$ in $\bar T_a$.
If $H_i$ and $H_j$ are disjoint, the result is immediate since the corresponding funnels do not overlap.  Otherwise (because neither half-polygon is contained in the other) $p_i q_i$ and $p_j q_j$ must intersect, say at point $x$. See Figure~\ref{fig:shortest-paths}.
Let $t_i$ and $t_j$ be the terminal points of the paths $\pi(a,H_i)$ and $\pi(a,H_j)$, respectively.  
If $t_i$ lies in $p_ix$ and $t_j$ lies in $xq_j$ then the result follows since $t_i$ and $t_j$ lie in order on the boundary of the truncated polygon formed by removing $H_i$ and $H_j$.
So suppose that $t_j$ lies in $p_jx$ (the other case is symmetric).  Then $\pi(a,t_j)$ crosses $p_i q_i$ at a point $z$ in $p_i x$. 
From $z$ to  \newcomment{$t_j$} the path $\pi(a,t_j)$ lies inside the 
\newcomment{
cone
with apex $x$ bounded by the rays from $x$ through $z$ and from $x$ through $t_j$.
Within that cone, the path only turns left.}  
The angle $\alpha_j$ at $t_j$ is $\ge 90^\circ$ (it may be $> 90^\circ$ if $t_j = p_j$), which implies that the angle $\alpha_i$ at $z$ is $> 90^\circ$.  Therefore $t_i$ lies to the left of $z$, as required.
\end{proof}

\begin{figure}[h]
\centering
\includegraphics[width=0.59\textwidth]{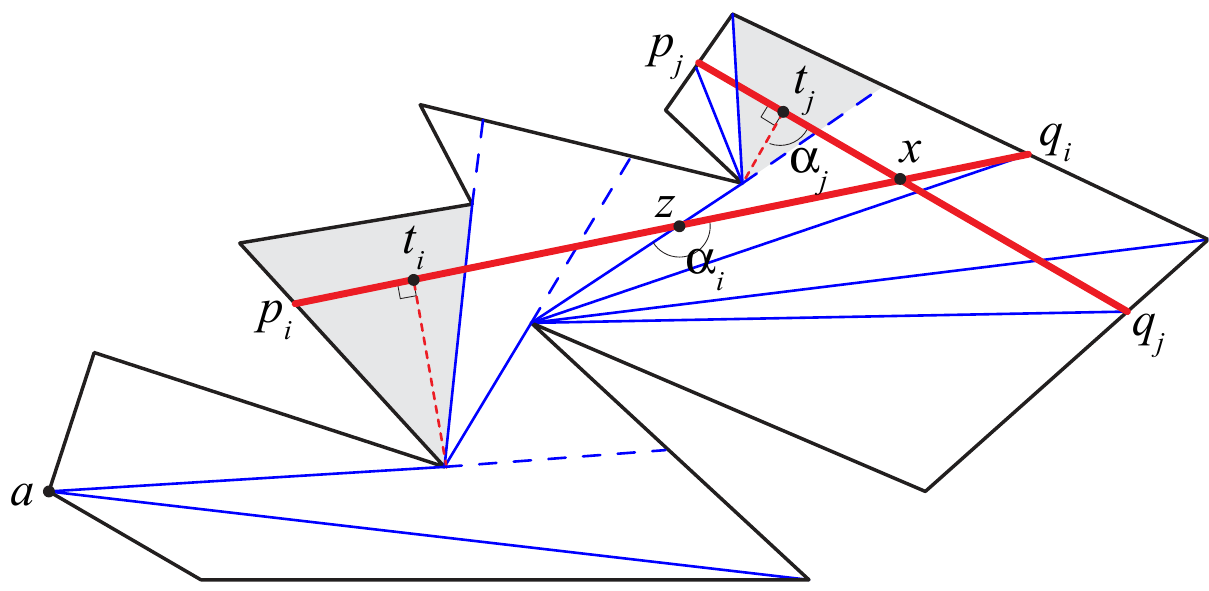}
\caption{The shortest path map $T_a$ (thin blue) and the augmentation (dashed red) to include shortest paths to the
two half-polygons bounded by chords $p_i q_i$  and $p_j q_j$ (thick red).
}
\label{fig:shortest-paths}
\end{figure}

Based on the Lemma, the algorithm
traverses the regions of the shortest path map $T_a$ in depth first search order, and traverses the half-polygons $H_i$ in order $i=1,2, \ldots, k'$. 
It is easy to test if one region contains the shortest path to $H_i$ 
(either to $p_i$, or to $q_i$, or reaching an internal point of $p_i q_i$ at a right angle); 
if it does, we increment $i$, and otherwise we proceed to the next region.
The total time is $O(n+k)$.

\subsubsection{Functions to Capture the Distance to Farthest Half-Polygons}
\label{section:functions_to_capture}

\changed{In this section we capture the geodesic radius function for points on a chord $K=ab$ as the upper envelope of functions defined on overlapping subintervals of $K$. 
Besides extending the method of Pollack et al.~\cite{pollack_sharir} to deal with half-polygons (rather than vertices), 
our aim is to give a clearer and easier-to-verify presentation.}

In  more detail, we
give a linear time algorithm to find
a linear number of 
\changed{easy-to-compute convex} functions defined on the chord $K=ab$ whose upper envelope is the geodesic radius function $r(x,{\cal H})$ for $x \in K$.
Specifically, a \defn{coarse cover} is 
a set of %
\changed{triples $(I,f,H)$}
where: 
\begin{enumerate}
\squeezelist
\item 
$I$ is a subinterval of $K$, $f$ is a function defined on domain $I$, 
\fchanged{and $H \in {\cal H}$.}

\item 
$f(x) = d(x,H)$
\fchanged{for all $x \in I$}, and
$f$ 
\newestchanged{
has one of the following forms:
\begin{itemize}
\squeezelist
\item $f(x) = 0$.
\item $f(x) = d_2(x,v) + \kappa$ where $d_2$ is Euclidean distance, $\kappa$ is a constant,
$v$ is 
a vertex of $P$, and 
the segment $xv$ is the first segment of the path $\pi(x,H)$.

\item $f(x) = d_2(x, {\bar h})$, where $d_2$ is Euclidean distance, $\bar h$ is the line through the defining chord of $H$,  and the path $\pi(x,H)$ is the straight line segment from $x$ to $\bar h$ (meeting $\bar h$ at right angles).
\end{itemize}
}

\item
\newestchanged{For any point $x \in K$  and any half-polygon $H$ that is farthest from $x$, there is a triple $(I,f,H)$ in the coarse cover with $x \in I$---with the exception that if two triples have identical $I$ and %
\attention{identical $f=d_2(x,v) + \kappa$},
then we may eliminate one of them.

In particular, this implies that the upper envelope of the functions is the geodesic radius, i.e., 
for any $x \in K$, the
maximum of $f(x)$ over intervals $I$ containing $x$ is equal to $r(x,{\cal H})$.
}
\end{enumerate}

For intuition, see Figure~\ref{fig:intervals-functions}, 
which shows several intervals and their associated functions. 
\fchanged{We will find the elements of the coarse cover separately for the two pieces of the polygon on each side of $K$, and then take the union of the two sets.}
In this section we visualize $K$ as horizontal and deal with the upper piece of the polygon.

\begin{figure}[h]
  \centering
\includegraphics[width=0.6\textwidth]{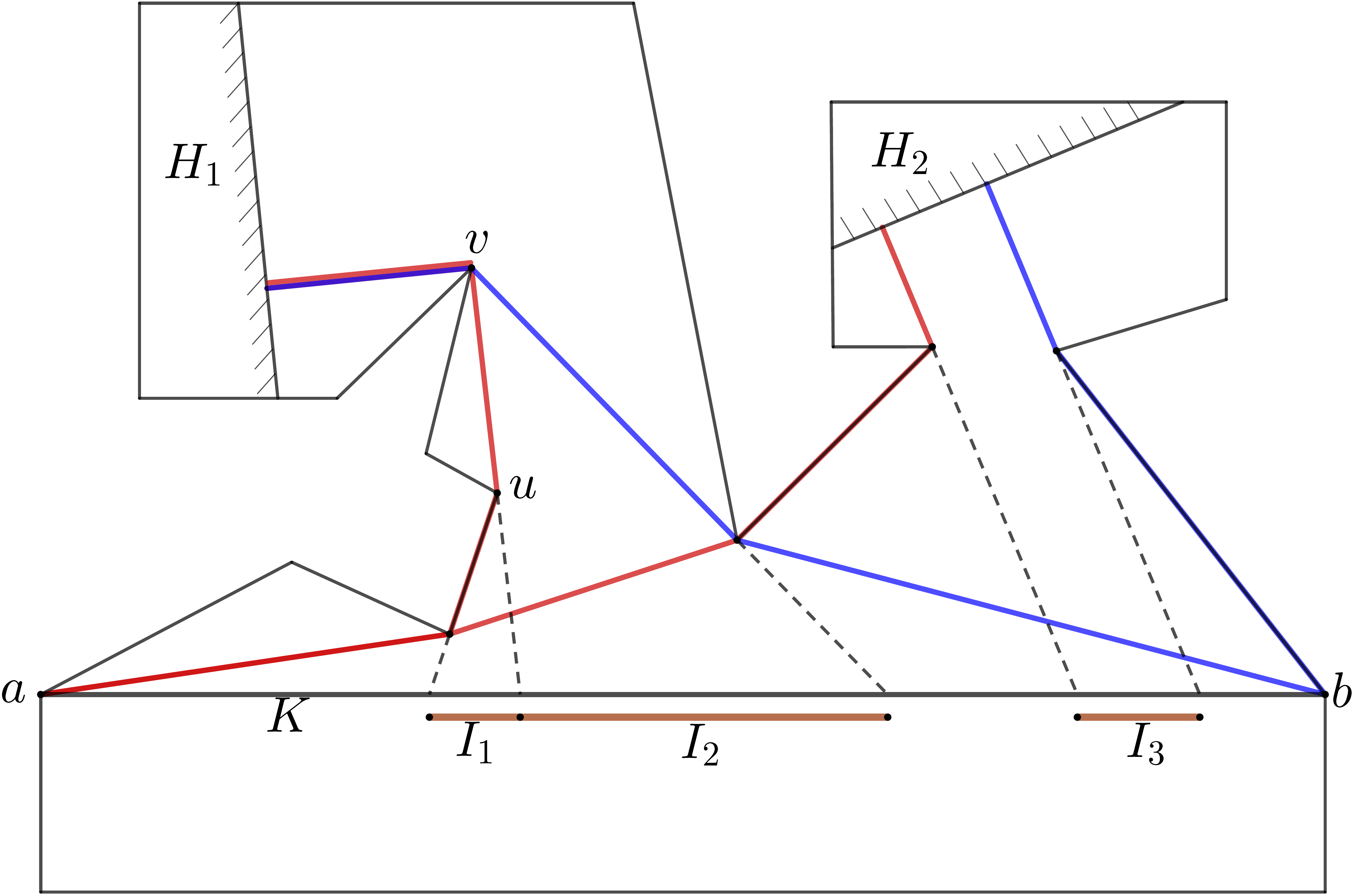}
\caption{
An illustration of functions and intervals. For $x$ in interval $I_1$, $d(x,H_1) = d_2(x,u) + \kappa_1$. For $x$ in $I_2$, $d(x,H_1) = d_2(x,v) + \kappa_2$.  For $x$ in $I_3$, $d(x,H_2) = d_2(x,H_2)$.
}
\label{fig:intervals-functions}
\end{figure}

\medskip
\noindent{\bf A large coarse cover.}
We first describe a
coarse cover of $O(nk)$ triples 
and then show how to 
reduce to linear size.
Consider a half-polygon $H$ with defining chord $h$.  Suppose first that $K$ does not intersect $H$, i.e., $a$ and $b$ lie outside $H$. 
All shortest paths from points on $K$ to $H$ lie in the \defn{funnel} $Y(H)$ which is a subpolygon bounded by the chord $K$, the path $\pi(a,H)$ (which is a path in ${\bar T}_a$), the path $\pi(b,H)$ (in ${\bar T}_b$), and the segment along $h$ between the terminals of those two paths.
See Figure~\ref{fig:funnels}. 
If the paths $\pi(a,H)$ and $\pi(b,H)$ are disjoint then they are both reflex paths and all vertices on the paths are visible from $K$ (see Figure~\ref{fig:funnels}(a)).  Otherwise, the paths are reflex and visible from $K$ until they reach the first common vertex $u$, and then they have a common subpath from $u$ to $H$ that is not visible from $K$ (see Figure~\ref{fig:funnels}(b)).

\fchanged{Before describing how to obtain triples of the coarse cover from $Y(H)$, we first consider the case when $K$ intersects $H$, i.e., $a$ or $b$ lies inside $H$.
If both $a$ and $b$ are inside $H$, then
we add the triple $(I=ab,f=0,H)$ to the coarse cover. 
If $b$ is outside but $a$ is inside (the other case is symmetric), then $h$ and $K$ intersect at a point $p$. 
If $\pi(b,H)$ reaches $H$ below $K$, then $H$ will be handled when we deal with the piece of the polygon below $K$. 
Otherwise (see Figure~\ref{fig:funnels}(c)) we
add the triple $(I=ap, f=0, H)$ to the coarse cover, and we deal with the $pb$ portion of the chord as in the general case above but modifying the funnel $Y(H)$ so that the path $\pi(a,H)$ is replaced by $p$. 
}

\begin{figure}
    \centering
    \includegraphics[width=.95\textwidth]{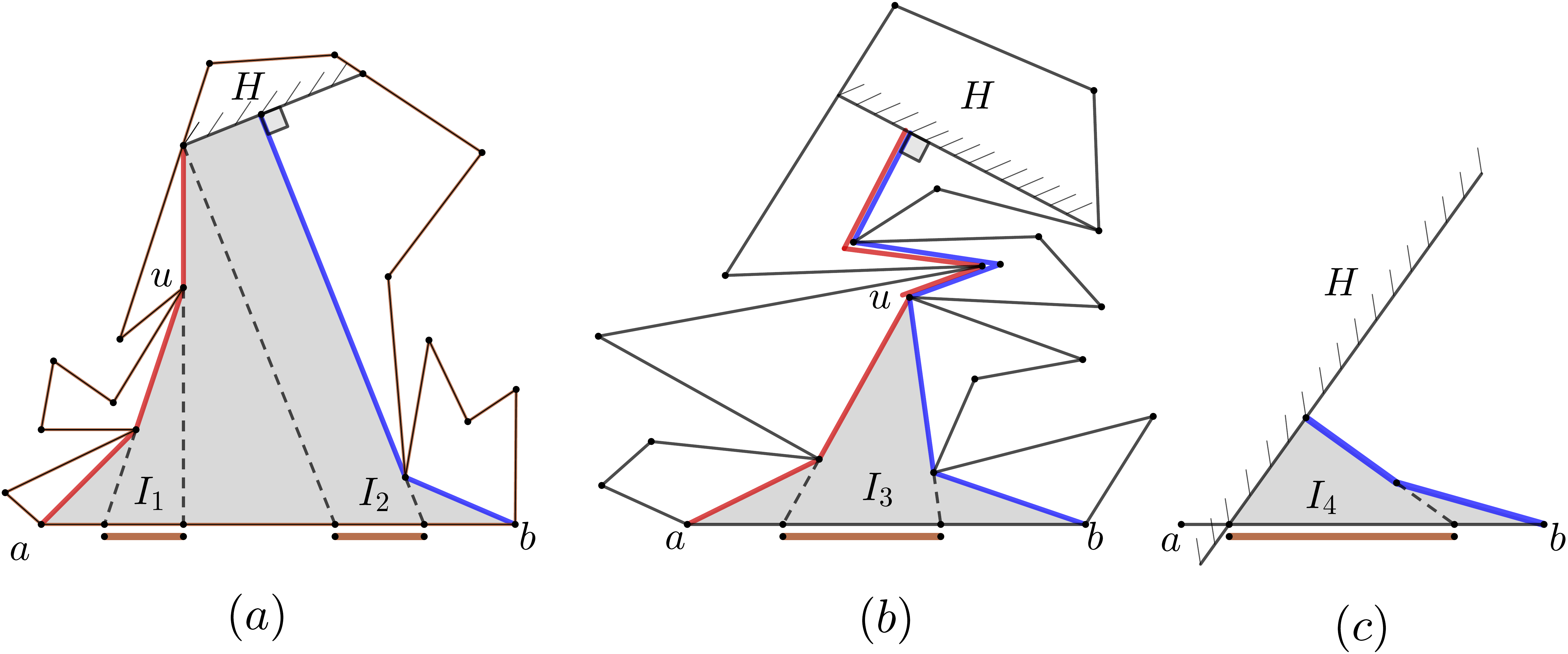}
    \caption{The funnel $Y(H)$ (shaded) and its shortest path map which is bounded by edge extensions (dashed segments). (a) 
    The case of disjoint paths $\pi(a,H)$ and $\pi(b,H)$.
    For $x \in I_1$, $d(x,H) = d_2(x,u) + \kappa$.  For $x \in I_2$, $d(x,H) = d_2(x,{\bar h})$.
    (b) The case of paths that meet at vertex $u$. 
    (c) The case where $a \in H$.}
    \label{fig:funnels}
\end{figure}

Each funnel $Y(H)$ can be partitioned into its \defn{shortest path map 
$M(H)$}
where two points are in the same region of 
$M(H)$
if their paths to $H$ are combinatorially the same. 
(We consider a path that arrives at an endpoint of $h$ and a path that arrives at an interior point of $h$ to be combinatorially different.) 
Observe that boundaries of the regions of $M(H)$ are extensions of tree edges plus lines perpendicular to $h$. 
See Figure~\ref{fig:funnels}.
The regions of 
$M(H)$
are triangles, plus possibly one trapezoid. 
A triangle region has 
a base segment $I \subseteq K$, and 
an apex vertex $u$ on $\pi(a,H)$ [or $\pi(b,H)$];  the shortest path from any point $x \in I$ to $H$ consists of the line segment $xu$ plus the path in ${\bar T}_a$ [or ${\bar T}_b$] from $u$ to $H$, so $d(x,H) = d_2(x,u) + \kappa$ where  $\kappa$ is the tree distance from $u$ to the leaf corresponding to $H$.  
A trapezoid region has 
a base segment $I \subseteq K$, and 
two sides orthogonal to $h$; the shortest path from any point $x \in I$ to $H$ consists of the line segment orthogonal to $h$ from $x$ to $h$,
so $d(x,H) = d_2(x,{\bar h})$ where $\bar h$ is the line through $h$. 
Thus each region of $M(H)$ gives rise to a triple $(I,f,H)$ satisfying properties (1) and (2) of a coarse cover.  

\changed{We claim that the set of triples defined above, i.e., 
all the triples 
defined from $Y(H)$ together with the special triples when $H$ intersects $K$, form a coarse cover. Properties (1) and (2) are satisfied, and property (3) is satisfied  
because}
we have captured all shortest paths from $x$ to $H$ for all $x \in K$ and all half-polygons $H$.
Since each $Y(H)$ has size $O(n)$, this coarse cover has size $O(nk)$.

\medskip
\noindent{\bf Intuition for a linear-size coarse cover.}
The secret to reducing the size of the coarse cover is to observe that if the funnels 
\newestchanged{for some
half-polygons ${\cal H}' \subseteq {\cal H}$}
share an edge $uv$ of ${\bar T}_a$ with $u$ closer to the root, and both $u$ and $v$ visible from $K$, then their shortest path maps share the same triangle with apex $u$, 
\newestchanged{base $I$},
and sides bounded by the %
\changed{extension of the edge from $v$ to $u$ and the extension of the edge from $u$ to its parent in ${\bar T}_a$ (see Figure~\ref{fig:coarse-cover}(a)).}
\newestchanged{In this case, we claim that for this triangle, we only need a coarse cover element for one of the half-polygons in ${\cal H}'$, specifically, for one 
half-polygon that has the maximum distance from $v$ in the tree ${\bar T}_a$.  This is because only half-polygons farthest from $v$ matter, and furthermore,
we need not keep more than one half-polygon that has the maximum distance because the interval $I$ and the function $f(x) = d_2(x,u) + \kappa$ are the same.} 
We first specify the coarse cover precisely and then prove correctness, which makes the above observation formal.

\medskip
\noindent{\bf Definitions.}
Let ${\bar T}_a$ and ${\bar T}_b$ be directed from root to leaves.
For any node $v$ in $\bar T_a$ %
define \defn{$\ell_a(v)$} to
be the maximum length of a directed path in $\bar T_a$ from $v$ to a  leaf node %
representing a terminal point on some half-polygon, and define 
\defn{$F_a(v)$} to be that farthest half-polygon
\newestchanged{(breaking ties arbitrarily)}.  
Define functions \defn{$\ell_b$} and 
\defn{$F_b$} similarly.
We can compute these functions in linear time in leaf-to-root order. In particular,  
we  compute $\ell_a(u)$ for the nodes $u$ of $\bar T_a$ as follows.  Initialize $\ell_a(u)$ to 0 if $u$ represents a terminal point of a half-polygon chord, and  to $-\infty$ otherwise. Then from leaf-to-root order, update  $\ell_a(u)$ to $\max   \{\ell_a(u),  \max  \{|u v| + \ell_a(v) :  v$ a child of $u \}\}$. 
We can compute $\ell_b(u)$ similarly.  The runtime is $O(n+k)$.

Define \defn{$p_a(u)$} and \defn{$p_b(u)$} to be the parents of node $u$ in $\bar T_a$ and $\bar T_b$, respectively.
As noted by 
Pollack et al.~\cite{pollack_sharir}, 
a vertex $u$ is visible from some point on $K$ if and only if $p_a(u) \ne p_b(u)$. 
Furthermore, we note that if $u$ is visible from some point on $K$, then extending the edge from $u$ through $p_a(u)$ reaches a point
\defn{$x_a(u)$}
on $K$ from which $u$ is visible.
Similarly, extending the edge from $u$ through $p_b(u)$ reaches a point
\defn{$x_b(u)$}
on $K$ from which $u$ is visible.

\changed{
In defining the  shortest path map $M(H)$, we added boundary lines orthogonal to the defining chord $h$ at the terminals of the paths $\pi(a,H)$ and $\pi(b,H)$.  If a path terminates at an internal point of $h$ then the last edge of the path is orthogonal to $h$, and the boundary line extends the last edge.
In order to avoid special cases, it will be convenient if all boundary lines are extensions of tree edges, i.e.,   
to assume that even the paths that terminate at endpoints of $h$ end with a segment orthogonal to $H$.  We add $0$-length segments to the trees
${\bar T}_a$ and ${\bar T}_b$ to make this true. The extension of such a $0$-length edge is orthogonal to $H$.
Note that it is possible that both $\pi(a,H)$ and $\pi(b,H)$ terminate at the same endpoint of $h$, in which case the added 0-length segment is common to both trees, so we regard the terminal point of the paths as \emph{not} visible from $K$. 
See Figure~\ref{fig:coarse-cover}(c).
(The other endpoint of the 0-length segment may or may not be visible.)}

\medskip
\noindent{\bf The coarse cover $\cal T$.}
Define \defn{$\cal T$} to have elements
of the following %
\fchanged{four}
types.  See Figure~\ref{fig:coarse-cover}.

\begin{enumerate}
\squeezelist
\setcounter{enumi}{-1}
\item 
\fchanged{For each half-polygon $H$ that intersects $K$ there is a coarse cover element $(I,f,H)$ where $I = K \cap H$ and $f(x) = 0$.} 

\item For each edge $(u,v)$ in ${\bar T}_a$ where $u = p_a(v)$,
\changed{$u \ne a$}, and $u$ and $v$ are both visible from $K$, there is an associated 
{\bf $a$-side triangle} that has apex $u$ and base $I = [x_a(u), x_a(v)] \subseteq K$. 
The associated coarse cover element is $(I,f,H)$ where $H = F_a(v)$ and 
\attention{$f(x) = d_2(x,u) + |uv| + \ell_a(v)$.}
Define {\bf $b$-side triangles} and their associated coarse cover elements symmetrically.

\item For each edge $(u,v)$ that is common to ${\bar T}_a$ and ${\bar T}_b$ where $u$ is visible from $K$ and $v$ is not 
(i.e., $u = p_a(v) = p_b(v)$) there is an associated 
{\bf central triangle} that has apex $u$ and 
base $I = [x_a(u), x_b(u)] \subseteq K$.
The associated coarse cover element is $(I,f,H)$ where $H = F_a(v)= F_b(v)$ and
\attention{$f(x) = d_2(x,u) + |uv| + \ell_a(v)$.}

\item For each half-polygon $H$ such that the terminal points \defn{$t(a,H)$} of $\pi(a,H)$ and \defn{$t(b,H)$} of $\pi(b,H)$ are distinct, 
there is a {\bf central trapezoid}
with base $I \subseteq K$ bounded by the 
two lines perpendicular to $h$ emanating from $t(a,H)$ and $t(b,H)$---these lines are the extensions of the (possibly $0$-length) last edges of the paths.
The associated coarse cover element is $(I,f,H)$ where $H$ is the given half-polygon and $f(x) = d_2(x,{\bar h})$ where $\bar h$ is the line through the defining chord of $H$. 

\end{enumerate}

\begin{figure}[htb]
    \centering
    \includegraphics[width=\textwidth]{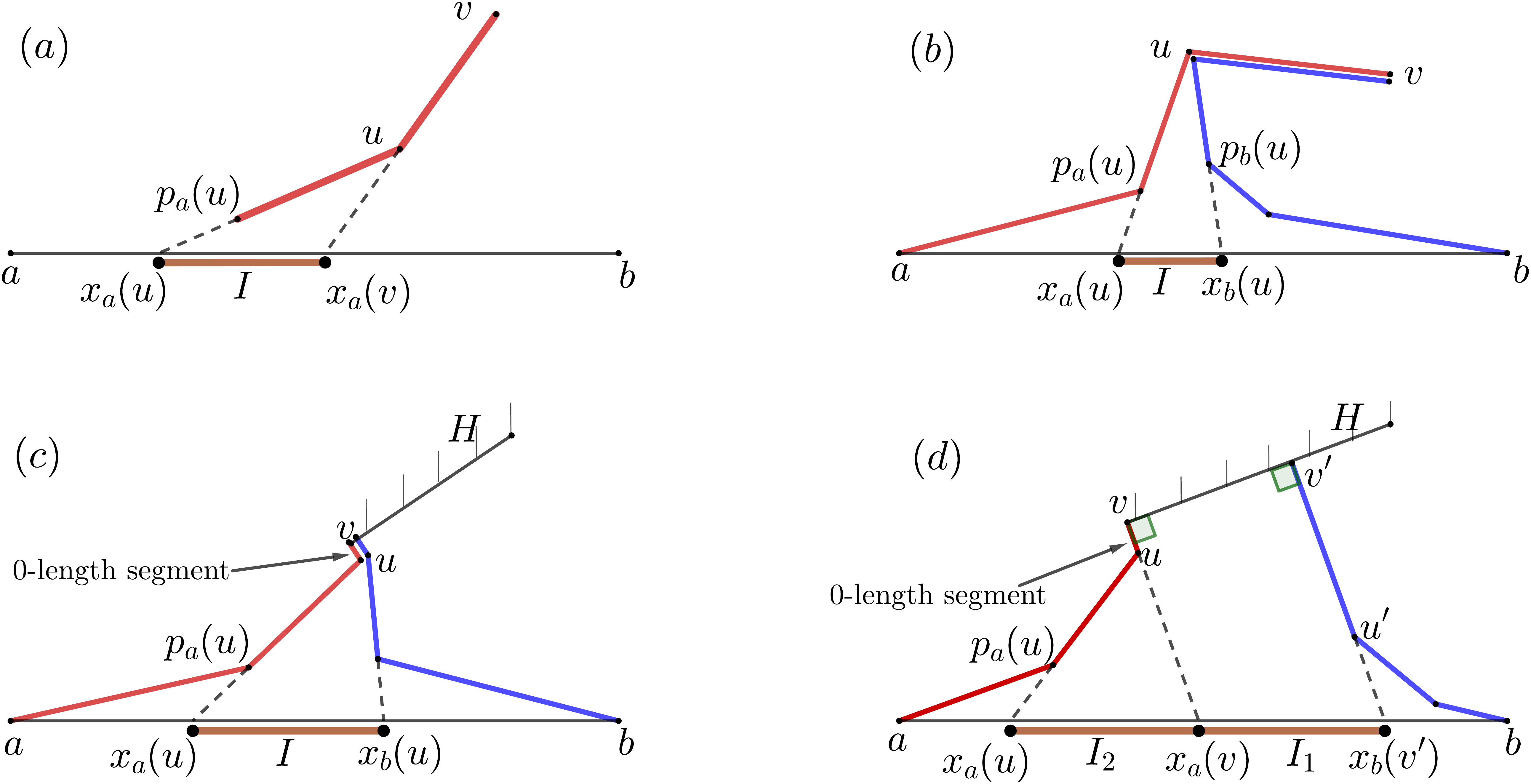}
    \caption{
    Elements of the coarse cover: (a) $I$ is the base of an $a$-side triangle associated with edge $uv$; (b) $I$ is the base of a central triangle associated with edge $uv$; 
    (c) $I$ is the base of a central triangle associated with $0$-length edge $uv$;
    (d) $I_1$ is the base of a central trapezoid associated with terminal points $v$ and $v'$ on $H$; $I_2$ is the base of an $a$-side triangle  associated with $0$-length edge $uv$.
    }
    \label{fig:coarse-cover}
\end{figure}

Note that we include $0$-length edges in cases 1 and 2 above.
Altogether, $\cal T$ contains $O(n+k)$ triples---at most one associated with each edge of the trees, and at most 
\fchanged{two}
associated with each half-polygon $H \in {\cal H}$.

\begin{lemma}\label{lemma:coarse_cover}
$\cal T$ is a coarse cover.
\end{lemma}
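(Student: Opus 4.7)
The plan is to verify the three defining properties of a coarse cover directly from the construction of $\cal T$. Property (1) is immediate. For property (2) I would proceed by a case analysis over the four element types. For Type 0, the interval lies in $H$ so $d(x,H)=0$. For an $a$-side triangle associated with edge $(u,v)$ of $\bar T_a$, the claim is that the base interval $I = [x_a(u), x_a(v)]$ is exactly the set of $x \in K$ whose shortest path to the half-polygon $F_a(v)$ starts with the visible segment $xu$ (using that $u$ and $v$ are both visible from $K$ and that the boundary rays through $x_a(u)$ and $x_a(v)$ are edge extensions of the tree) and then follows $\bar T_a$ from $u$ down through $v$ to its farthest descendant leaf; since $\bar T_a$ is a shortest path tree this tail has length $|uv| + \ell_a(v)$, yielding $f(x) = d_2(x,u) + |uv| + \ell_a(v) = d(x, F_a(v))$. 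The $b$-side case is symmetric, and the central triangle case is analogous using that $v$ lies on the common portion of $\bar T_a$ and $\bar T_b$ (so $F_a(v) = F_b(v)$). For the trapezoid case, points in $I$ are precisely those whose shortest path to $H$ hits $\bar h$ at a right angle inside $h$, so $d(x,H) = d_2(x, \bar h)$.

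For property (3), fix $x \in K$ and a farthest half-polygon $H$ with $d(x,H) = r(x,{\cal H})$. I would argue by cases on the structure of $\pi(x,H)$. If $H$ meets $K$ at $x$, the Type 0 element for $H$ captures $x$. Otherwise $\pi(x,H)$ lies in the funnel $Y(H)$, and $x$ lies in some region of its shortest path map $M(H)$: either the (unique) trapezoid region, which gives exactly the Type 3 element for $H$ (noting $t(a,H) \neq t(b,H)$ when this region is nonempty), or else a triangular region with apex at some vertex $u$ visible from $K$, whose next vertex $v$ along $\pi(x,H)$ is a child of $u$ in $\bar T_a$, in $\bar T_b$, or in both (the last happens precisely when $v$ is not visible from $K$, corresponding to the central triangle case). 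The convention of appending $0$-length edges at terminals of the paths $\pi(a,H)$ and $\pi(b,H)$ lets the terminal case be subsumed by the same triangle description.

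The main obstacle will be that the triangle element of $\cal T$ associated with $(u,v)$ carries the farthest descendant $F_a(v)$, which may differ from the $H$ that we started with. I plan to resolve this using the exception clause by a two-sided inequality: since $H$ is a descendant of $v$ in $\bar T_a$, we have $\ell_a(v) \geq (\text{tree distance from } v \text{ to } H)$, hence $f(x) = d_2(x,u) + |uv| + \ell_a(v) \geq d(x,H) = r(x,{\cal H})$; but $f(x) = d(x, F_a(v)) \leq r(x,{\cal H})$, so equality holds throughout. This forces $\ell_a(v)$ to equal the tree distance from $v$ to $H$, so the element one would have written for $H$ and the actual element for $F_a(v)$ share the same interval $I$ and the same function $f$, and the exception clause is applied. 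The only remaining bookkeeping is to verify that the boundary cases (paths terminating at an endpoint of $h$, regions degenerating to a segment) behave consistently under the $0$-length edge convention, which amounts to routine checks against the definitions of $x_a$, $x_b$, and the perpendicular boundary lines.
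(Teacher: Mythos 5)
Your proposal is correct and follows essentially the same approach as the paper's proof: verify properties (1) and (2) by direct inspection of the four element types (each triangle or trapezoid is a region of the shortest path map $M(H)$), then for property (3) reduce to the large coarse cover $\mathcal{T}^0$ and observe that any omitted triangle triple $(I,f,H)$ must have the same $I$ and $f$ as the retained triple $(I,f,F_a(v))$ when $H$ is actually farthest, so the exception clause applies. The only stylistic difference is that you phrase the key step as a two-sided inequality forcing $\kappa = \ell_a(v)$, whereas the paper argues by the dichotomy ``strict versus equal''; these are logically equivalent.
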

\begin{proof}
There are three properties for a coarse cover.
For each triple $(I,f,H)$ it is clear that $I$ is a subinterval of $K$ and $f$ is defined on $I$---this is property (1).  For property (2), 
\newestchanged{$f$ is defined to have one of the three forms.}
\changed{Furthermore, we claim that for each triple 
$(I,f,H)$, and each $x \in I$, 
$f(x) = d(x,H)$.  
\fchanged{This is clear for Case 0.  For the other three Cases,} 
the triangle or trapezoid is part of the shortest path map $M(H)$, so the formula for $f(x)$ matches the distance $d(x,H)$.}

Finally, we must prove property (3).
\newestchanged{Let ${\cal T}^0$ be the initial large coarse cover defined above, consisting of the set of triples $(I,f,H)$ from Case 0 and
the union of all the
\newestchanged{triples arising from the}
shortest path maps $\{M(H): H \in {\cal H}\}$. 
Then ${\cal T} \subseteq {\cal T}^0$, and 
we must show that no triple of ${\cal T}^0$ that is omitted from $\cal T$ causes a violation of property (3).}
Any triple from the shortest path maps corresponds to a triangle that arises in Case 1 or 2 (i.e., with the same $u,v,I$), or to a trapezoid considered in Case 3. 
No trapezoids are omitted in $\cal T$, so it suffices to consider Cases 1 and 2.

We first examine Case 1.
Consider an edge $(u,v)$ in ${\bar T}_a$ where $u = p_a(v)$ and $u$ and $v$ are both visible from $K$, and consider the interval $I = [x_a(u),x_a(v)]$.
Suppose that for some $f,H$, there is a triple $(I,f,H)$ that is
included in the triples from the shortest path maps, but omitted from $\cal T$. Then there is a directed path in ${\bar T}_a$ from $v$ to a leaf corresponding to $H$, and $f(x)$ is $d_2(x,u) + |uv| + \kappa$ where $\kappa$ is  the length of the tree path from $v$ to the leaf corresponding to $H$. 
But then $f(x) \le d_2(x,u) + |uv| + \ell_a(v)$, since $\ell_a(v)$ is the maximum distance from $v$ to a leaf
\newestchanged{corresponding to farthest half-polygon $F_a(v)$.  If the inequality is strict, then $H$ is not a farthest half-polygon from any $x \in I$ so property (3) is satisfied without the
triple $(I,f,H)$.
And if equality holds, then property (3) allows us to omit the triple $(I,f,H)$
since the triple $(I,f,F_a(v))$ has the same $I$ and $f$.}
The case of triples omitted in Case 2 is similar. 
\end{proof}

\subsubsection{Finding the Relative Geodesic Center on a Chord}
\label{sec:find-rel-center}

The last step of the chord oracle is exactly the same as in Pollack et al.~\cite{pollack_sharir}. 
Given a chord $K$ and the 
\changed{coarse cover $\cal T$ from Section~\ref{section:functions_to_capture}---which provides a set of $O(n+k)$ functions whose upper envelope is the geodesic radius function on chord $K$---we want to find the relative center, $c_K$, that minimizes the geodesic radius function.}  
Pollack et al.~use a technique of Megiddo's to do this in  $O(n+k)$ time
\newestchanged{by recursively reducing to a smaller subinterval of $K$ while eliminating elements 
of the coarse cover whose functions are strictly dominated by others.}
In brief, the idea is to pair up the functions, 
define a set of at most 6 ``extended intersection points'' for each pair, 
and test medians of those points in order to \changed{restrict the search to a subinterval of $K$ and} eliminate a constant fraction of the functions. %
\newestchanged{
Testing median points is done via
the test from Section~\ref{sec:testing_center} 
of whether the relative center is left/right of a query point $x$ on $K$.
This test depends on having  
the first segments of the shortest paths  from $x$ to its farthest half-polygons.
Observe that 
the initial coarse cover from Section~\ref{section:functions_to_capture} captures these segments, and they are preserved throughout the recursion because only strictly dominated functions are eliminated.
}

We fill in a bit more detail.
In each round we have a 
subinterval $K'$ of $K$ that contains $c_K$ and a subset ${\cal T}'$ of the coarse cover $\cal T$
\fchanged{such that any function omitted from ${\cal T}'$ is strictly dominated on interval $K'$ by a function of ${\cal T}'$}.
\newestchanged{We want to eliminate a constant fraction of ${\cal T}'$ in time $O(|{\cal T}'|)$.}
We pair up the functions of ${\cal T}'$. 
Consider a pair of functions $f_1$ and $f_2$.  Each function is defined on a subinterval of $K$ and we define it to be $- \infty$ outside its interval. 
The upper envelope of $f_1$ and $f_2$ switches between $f_1$ and $f_2$ at \defn{extended intersection points} which include the points where $f_1$ and $f_2$
intersect \fchanged{(are equal)}, 
and also possibly the endpoints of their intervals.  
\newestchanged{If a subinterval of $K$ does not contain an extended intersection point for $f_1$ and $f_2$, then}
one of $f_1, f_2$ is irrelevant because it is dominated by the other (or both are $- \infty$). 
We know from Section~\ref{section:functions_to_capture} that each function 
has the form \fchanged{$f(x)  =0$} or $f(x) = d_2(x,s) + \kappa$ where $\kappa$ is a constant,
$d_2$ is Euclidean distance, and  
$s$ is 
a point or line. 
This implies that there are at most two intersection points of the functions $f_1$ and $f_2$, and thus at most six extended intersection points.  In fact, a closer examination shows that there are at most four extended intersection points.  

Pollack et al.~show how to successively test three medians of extended intersections in order to reduce the interval $K'$ and eliminate a constant fraction of the functions of ${\cal T}'$.
\fchanged{The first median test reduces the domain to a subinterval containing half the extended intersections, so three successive median tests reduce the domain to a subinterval containing one eighth of the extended intersections.  This implies that for at least half the pairs $f_1, f_2$, all four of their extended intersections lie outside the domain, and one of $f_1, f_2$ is dominated by the other and can be eliminated.}

This completes one round of their procedure, 
\newestchanged{with a runtime of $O(|{\cal T}'|)$.} 
When ${\cal T}'$ is reduced to constant size, the relative center $c_K$ can be found directly.
The total run time is then $O(n+k)$.

\subsection{Finding the Geodesic Center of Half-Polygons}\label{section:center_using_oracle}
In this section we show how to use the 
$O(n+k)$ time chord oracle from Section~\ref{section:chord_oracle} to find the geodesic center of the $k$ half-polygons in $O((n+k) \log (n+k))$ time.
The basic structure of the algorithm is the same as that of Pollack et al.~\cite{pollack_sharir}. 

In the first step we use the chord oracle to restrict 
the search for the geodesic center
to a small region where the problem reduces to a Euclidean problem 
\changed{of finding a minimum radius disk that intersects some half-planes and contains some disks}.
\changed{This step takes $O((n+k) \log (n+k))$ time.}
\newcomment{In the second step we solve the resulting Euclidean problem
\fchanged{in linear time}, which}
involves some 
new ingredients to handle our case of half-polygons.

\subsubsection{Finding a Region that Contains the Geodesic Center}
\label{section:restrict}
\label{section:triangle}

Triangulate $P$ in linear time~\cite{chazelle1991triangulating}.
Choose a chord of the triangulation that splits the polygon into two subpolygons so that the number of triangles on each side is balanced \changed{(the dual of a triangulation is a tree of maximum degree 3, which has a balanced cut vertex).} 
Run the chord oracle on this chord, and recurse in the appropriate subpolygon. 
In $O(\log n)$ iterations, we narrow our search down to one triangle $T^*$ of the triangulation. This step takes $O((n + k) \log n)$ time.

Next, we refine $T^*$ to a %
region
$R$ that contains the center and such that 
$R$ is \defn{homogeneous}, meaning that
for any $H \in \cal H$ the shortest paths from points in $R$ to $H$ have the same combinatorial structure, i.e., the same sequence of polygon vertices along the path.

The idea is to subdivide $T^*$ by $O(n+k)$ lines so that each cell in the resulting line arrangement is homogeneous, and then to find the cell containing the center.   
Construct the shortest path trees to $\cal H$ from each of the three corners of triangle $T^* = (a^*,b^*,c^*)$ using the algorithm of Section~\ref{section:shortest_path_tree_half_polygons}. For each edge $(u,v)$ of each tree, add the line through $uv$ if it intersects $T^*$.  (In fact, we do not need all these lines---\changed{as in the construction of the coarse cover in Section~\ref{section:functions_to_capture}, it suffices to use tree edges $(u,v)$ such that $u$ is visible from an edge of $T^*$.)}
We add three more lines for each  half-polygon $H \in {\cal H}$, specifically,  
the chord $h$ that defines $H$,
and the two lines perpendicular to $h$ through the endpoints of $h$.
The result is a set $L$ of $O(n+k)$ lines that we obtain in time $O(n+k)$. It is easy to 
see
that the resulting line arrangement has homogeneous regions.

All that remains is to find the cell of the arrangement that contains the geodesic center. 
It is simpler to state the algorithm in terms of $\epsilon$-nets 
instead
of the rather involved description of Megiddo's technique used by Pollack et al.~\cite{pollack_sharir}.
\changed{
For background on $\epsilon$-nets see the survey by 
Mustafa and Varadarajan~\cite[Chapter 47]{toth2017handbook} or the book by Mustafa~\cite{mustafa2022sampling}.}
\changed{The 
\newestchanged{high-level} idea is to define a range space with ground set $L$ and to find a constant-sized $\epsilon$-net in time $O(|L|)$.
Then the lines of the $\epsilon$-net divide our region into a %
\newestchanged{constant}
number of subregions and we can find which subregion contains the geodesic center by applying the chord oracle $O(1)$ times.  By the property of $\epsilon$-nets the subregion is intersected by only a constant fraction of the lines of $L$, 
so repeating this step for $O(\log (n+k))$ times, we arrive at a region $R$ with the required properties.}

\changed{We fill in a bit more detail.
The range space has ground set $L$. To define the ranges, let}
$\mathcal{T}$ be the 
\newestchanged{(infinite) set of all 
triangles contained in} $T^*$.
\revised{For $t \in \mathcal{T}$, let $\Delta_t = \{ \ell \in L : \ell \ {\rm intersects}\ t\}$.
Let $\Delta = \{ \Delta_t: t \in {\cal T} \}$.}
\changed{Then the range space is $S = (L,\Delta)$.
To show that constant-sized $\epsilon$-nets exist, we must show that $S$ has constant VC-dimension, or constant shattering dimension. 
We argue
that the shattering dimension is 6, i.e., that for any subset $L'$ of $L$ of size $m$ the number of ranges is $O(m^6)$. 
The lines intersecting a triangle $t$ are the same as the lines intersecting the convex hull of the three cells of the arrangement of $L'$ that contain the endpoints of $t$.  There are $O(m^2)$ cells in the arrangement and we choose three of them, giving the bound of $O(m^6)$ possible ranges.}
Thus, a constant sized $\epsilon$-net of size $O(\frac{1}{\epsilon} \log (\frac{1}{\epsilon}))$ exists for our range space.
\changed{In order to construct an $\epsilon$-net in deterministic $O(|L|)$ time, we need a subspace oracle that, given a subset $L'$ of $L$ of size $m$, computes the set of ranges of $L'$ in time proportional to the output size, 
$O(m^{6+1})$.  Begin by finding, for each line in $L'$, which cells of the arrangement lie to each side of the line.  Then, for every choice of three cells (there are $O(m^6)$ choices),  the lines intersecting their convex hull can be listed in time $O(m)$ time.}

\changed{For the algorithm, we choose $\epsilon = \frac{1}{2}$, and construct an $\epsilon$-net $N$.}
\changed{Triangulate each cell of the arrangement of $N$---this is a constant time operation since the arrangement has constant complexity.}
We can locate the triangle $T'$ of this triangulated arrangement that contains the visibility center in $O(n+k)$ time by running the chord oracle a constant number of times.
By the $\epsilon$-net property, no more than $\epsilon \cdot |L|$ lines of $L$ intersect the interior of $T'$.
Thus, in $O(|L|)$ time, we have halved the number of lines going through our domain of interest (the region that contains the geodesic center).
Repeating the same sequence of steps $O(\log |L|) = O(\log (n+k))$ times, we will arrive at a triangle containing a constant number of these lines.
At this point, a brute force method suffices to locate a
region $R$ that satisfies the properties stated in the beginning of this section.

In each iteration of the process, we %
apply the $O(n+k)$ chord oracle a constant number of times and thus the 
total runtime for this step is
$O((n+k) \log (n+k))$.

\subsubsection{Solving an Unconstrained Problem}\label{section:unconstrained_problem}

At this point, 
\changed{we have a %
homogeneous polygonal region $R$ that}
contains a geodesic center of the set $\mathcal{H}$ of half-polygons. 
Our goal is to find
the point $x \in R$ that minimizes the maximum over $H \in {\cal H}$ of $d(x,H)$. 
\newestchanged{We give a linear time algorithm (in this final step there is no need for an extra logarithmic factor).}
\changed{We show that the problem reduces to one in the Euclidean plane, i.e., the polygon no longer matters.}
Pick an \changed{arbitrary} point $p$ in $R$ and find the shortest path tree from $p$ to all half-polygons (this takes linear time). 
\changed{If $p$ has distance 0 to half-polygon $H$, then the same is true for all points in $R$, so $H$ is irrelevant and can be discarded. 
If $\pi(p,H)$ \newestchanged{consists of a single line segment that} reaches an internal point of the chord defining $H$ (we denote these half-polygons by ${\cal H}_1$), then 
$d(x,H) = d_2(x, {\bar H})$ for all $x \in R$, where $\bar H$ is the half-plane defined by $H$.  
And if the first segment of $\pi(x,H)$ %
reaches
a vertex $u$ (we denote these half-polygons by ${\cal H}_2$), then $d(x,H) = d_2(x,u) + \kappa$ for all $x \in R$, where $\kappa$ is a constant.
}
\changed{Thus we seek a point $x = (x_1,x_2)$
and a value $\rho$ 
to solve:
\begin{equation*}
\begin{array}{ll@{}ll}
\text{minimize}  & \rho &\\
\text{subject to}&
d_2(x,{\bar H}) \le \rho & H \in {\cal H}_1\\

&d_2(x,u) + \kappa \le \rho \ \ \ &
\text{ for point $u$ and constant $\kappa$ corresponding to } H \in {\cal H}_2\\
\end{array}
\end{equation*}
Because $x$ is guaranteed to lie in the region $R$, 
we can completely disregard the underlying polygon $P$ in solving the problem.
}

In the Euclidean plane, the problem may be reinterpreted in a geometric manner.
We wish to find the
\changed{disk}
of smallest radius $\rho$ that  \textit{intersects} %
each of a given set of \changed{half-planes} %
and \textit{contains} %
each of a given set of disks.
\changed{For $H \in {\cal H}_1$, we have $d(x,H) \le \rho$ if and only if the disk of radius $\rho$ centered at $x$ intersects $\bar H$.
For $H \in {\cal H}_2$, with $d(x,H) = d_2(x,u) + \kappa$, we have 
$d(x,H) \le \rho$ if and only if the disk of radius $\rho$ centered at $x$ contains the disk of radius $\kappa$ centered at $u$. 
}
\changed{We will call this Euclidean problem the ``minimum feasible disk'' problem.}
\newestchanged{The constraints of the problem that correspond to the set of half-polygons ${\cal H}_1$ will be referred to as \defn{half-plane constraints}, while the constraints for ${\cal H}_2$ will be called \defn{disk constraints}.}

\changed{We observe here that the minimum feasible disk problem belongs to the class of `LP-type' problems described by Sharir and Welzl~\cite{sharir1992combinatorial}.
In fact, it satisfies the computational assumptions that allow a derandomization of the Sharir-Welzl algorithm yielding a \textit{deterministic} linear-time algorithm for the problem (see Chazelle and Matousek~\cite{chazelle1996linear}).
However, as this approach is rather complex, we will outline a more direct linear-time algorithm to solve the problem.
 }

\changed{The minimum feasible disk problem is a combination of two well-known problems that have linear time algorithms. 
\newestchanged{If all the constraints are half-plane constraints,}
then, because each such constraint 
can be written as a linear inequality, 
we have a 3-dimensional linear program, which can be solved in linear time as shown by Meggido~\cite{megiddo_linear}
and independently by Dyer~\cite{dyer1984linear}.
On the other hand, 
\newestchanged{if all the constraints are disk constraints,}
then this is the ``spanning circle problem''---to find the smallest disk that contains some given disks.  This problem  arose from the geodesic vertex center problem~\cite{pollack_sharir} and generalizes the Euclidean 1-center problem where the disks degenerate to points.  The problem was solved in linear time by Megiddo~\cite{megiddo_spanned_ball} using an approach similar to that for the 1-center problem and for linear programming.
Because the approaches are similar, it is not difficult to combine them, as we show below.
}

\changed{We begin by describing the main idea of Megiddo's prune-and-search approach for both  linear programming
\newestchanged{in 3D}
and for the spanning circle problem (also see the survey by Dyer et al.~in the Handbook of Discrete and Computational Geometry~\cite[Chapter 49]{dyer2017linear}). %
The goal is to spend linear time to prune away a constant fraction of the constraints that do not define the final answer, and to repeat this until there are only a constant number of constraints left, after which a brute force method may be employed. 
The idea is to pair up the constraints, and for each pair of constraints $c_1,c_2$ compute a ``bisecting'' plane $\Pi$ such that on one side of the $\Pi$ the constraint $c_1$ is redundant, and on the other side of $\Pi$ the constraint $c_2$ is redundant. 
If we could identify which side of $\Pi$ contains the optimum solution, then one of the constraints $c_1, c_2$ can be removed. 
We address the
existence of such bisecting planes  below.   
There are two other issues.  Issue 1 is to identify which side of a plane contains the optimum solution, a subproblem that Megiddo calls an ``oracle''.  This is done by finding the optimum point restricted to the plane (a problem one dimension down), from which the side of the plane can be decided. (The Chord Oracle from Section~\ref{section:chord_oracle} was doing a similar thing.)  Issue 2 is to identify the position of the optimum point relative to ``many'' of the bisecting planes, while testing only a ``small'' sample of them---\newestchanged{this can be done using \emph{cuttings}.}  
We will not discuss these two issues since they are the same as in Megiddo's papers~\cite{megiddo_linear,megiddo1984linear} (or see the survey by Dyer et al.~\cite{dyer2017linear}).}

\changed{For our minimum feasible disk problem, %
\newestchanged{we have two types of constraints---half-plane constraints and disk constraints.}
\newestchanged{Megiddo's prune-and-search approach based on pairing up the constraints can still be applied so long as we pair each constraint with another constraint of the same type.}
\newestchanged{(We note that this idea was previously used by Bhattacharya et al.~\cite{bhattacharya1994optimal} in their linear time algorithm to find the smallest disk that contains some given points and intersects some lines, a problem they call the ``intersection radius problem''.)}
Thus it suffices to describe what are the bisecting planes for the two types of constraints in our minimum feasible disk problem.}

\changed{
\newestchanged{A half-plane constraint}
has the form 
$d_2(x, {\bar H}) \le \rho$. If the 
halfplane $\bar H$ is given by $a_i^T x  \le b_i $,
normalized so that $||a_i|| = 1$, 
then the constraint is $a_i^T x \le b_i + \rho$, a linear inequality. 
For two such constraints indexed by $i$ and $j$, the bisecting plane is given by $(a_i^T - a_j^T) x - (b_i - b_j) = 0$.
} 

\changed{
\newestchanged{A disk constraint}
has the form $d_2(x,u_i) + \kappa_i \le \rho$, corresponding to a disk with center $u_i$ and radius $\kappa_i$.  
As Megiddo~\cite{megiddo_spanned_ball} noted, by adding the constraint $\rho \ge \kappa_i$,  this can be written as 
$$ ||x - u_i||^2 \le (\rho - \kappa_i)^2
$$
or as 
$$f_i(x,\rho) \le 0$$ where $f_i$ is defined as 
$$ f_i(x,\rho) = ||x||^2 - 2u_i^T x + ||u_i||^2 - \rho^2 + 2\kappa_i \rho - \kappa_i^2.$$
This is not a linear constraint, but
for $i \ne j$, the  equation $f_i(x,\rho) = f_j(x,\rho)$ defines a plane since the quadratic terms, $||x||^2$ and $\rho^2$, cancel out.  So the bisecting plane is 
$f_i(x,\rho) = f_j(x,\rho)$.}

\newestchanged{This completes the summary of how to solve the minimum feasible disk problem in linear time, and completes our algorithm to find the geodesic center of half-polygons.}

\bigskip

\section{Conclusions}

We introduced the notion of the visibility center of a set of points in a polygon and gave an algorithm with run time $O((n+m) \log (n+m))$ to find the visibility center of $m$ points in an $n$-vertex polygon.  To do this,  we gave an algorithm \changed{with run time $O((n+k)\log(n+k))$} to find the geodesic center of a given set of $k$ half-polygons inside a  polygon, a problem of independent interest. We conclude with some open questions.

Can the visibility center of a simple polygon be found more efficiently? 
Note that the geodesic center of the vertices of a simple polygon can be found in linear time~\cite{linear_time_geodesic}. 
Our current method involves ray shooting
and sorting (Section~\ref{section:essential-half-polygons} and the preprocessing in Section~\ref{section:half-polygon-center}) %
, which are serious barriers.  A more reasonable goal
\newcomment{is to find the visibility center of $m$ points in a polygon in time $O(n + m \log m)$.}

Is there a more efficient algorithm to find the geodesic center of (sorted) half-polygons?
In forthcoming work we give a linear time algorithm for the special case of finding the geodesic center of the \emph{edges} of a polygon (this is the case where the half-polygons hug the edges).

How hard is it to find 
the farthest visibility Voronoi diagram of a polygon? 
Finally, what about the 2-visibility center of a polygon, where we can deploy two guards instead of one?

\bibliographystyle{plainurl}   
\bibliography{RS_refs_1}

\begin{thebibliography}{10}

\bibitem{linear_time_geodesic}
Hee-Kap Ahn, Luis Barba, Prosenjit Bose, Jean-Lou De~Carufel, Matias Korman,
  and Eunjin Oh.
\newblock A linear-time algorithm for the geodesic center of a simple polygon.
\newblock {\em Discrete \& Computational Geometry}, 56(4):836--859, 2016.
\newblock \href {https://doi.org/10.1007/s00454-016-9796-0}
  {\path{doi:10.1007/s00454-016-9796-0}}.

\bibitem{arkin2016shortest}
Esther~M Arkin, Alon Efrat, Christian Knauer, Joseph S~B Mitchell, Valentin
  Polishchuk, G{\"u}nter Rote, Lena Schlipf, and Topi Talvitie.
\newblock Shortest path to a segment and quickest visibility queries.
\newblock {\em Journal of Computational Geometry}, 7:77--100, 2016.
\newblock URL: \url{https://jocg.org/index.php/jocg/article/view/3001}, \href
  {https://doi.org/10.20382/jocg.v7i2a5} {\path{doi:10.20382/jocg.v7i2a5}}.

\bibitem{aronov1993furthest}
Boris Aronov, Steven Fortune, and Gordon Wilfong.
\newblock The furthest-site geodesic {V}oronoi diagram.
\newblock {\em Discrete \& Computational Geometry}, 9(3):217--255, 1993.
\newblock \href {https://doi.org/10.1007/bf02189321}
  {\path{doi:10.1007/bf02189321}}.

\bibitem{aurenhammer2006farthest}
Franz Aurenhammer, Robert L~Scot Drysdale, and Hannes Krasser.
\newblock Farthest line segment {V}oronoi diagrams.
\newblock {\em Information Processing Letters}, 100(6):220--225, 2006.
\newblock \href {https://doi.org/10.1016/j.ipl.2006.07.008}
  {\path{doi:10.1016/j.ipl.2006.07.008}}.

\bibitem{aurenhammer2013voronoi}
Franz Aurenhammer, Rolf Klein, and Der-Tsai Lee.
\newblock {\em Voronoi Diagrams and Delaunay Triangulations}.
\newblock World Scientific Publishing Company, 2013.
\newblock \href {https://doi.org/10.1142/8685} {\path{doi:10.1142/8685}}.

\bibitem{barba2019optimal}
Luis Barba.
\newblock Optimal algorithm for geodesic farthest-point {V}oronoi diagrams.
\newblock In {\em 35th International Symposium on Computational Geometry (SoCG
  2019)}, volume 129 of {\em Leibniz International Proceedings in Informatics
  (LIPIcs)}, pages 12:1--12:14. Schloss Dagstuhl -- Leibniz-Zentrum f{\"u}r
  Informatik, 2019.
\newblock \href {https://doi.org/10.4230/LIPIcs.SoCG.2019.12}
  {\path{doi:10.4230/LIPIcs.SoCG.2019.12}}.

\bibitem{bhattacharya1994optimal}
Binay~K Bhattacharya, Shreesh Jadhav, Asish Mukhopadhyay, and J-M Robert.
\newblock Optimal algorithms for some intersection radius problems.
\newblock {\em Computing}, 52(3):269--279, 1994.
\newblock \href {https://doi.org/10.1007/bf02246508}
  {\path{doi:10.1007/bf02246508}}.

\bibitem{chazelle1991triangulating}
Bernard Chazelle.
\newblock Triangulating a simple polygon in linear time.
\newblock {\em Discrete \& Computational Geometry}, 6(3):485--524, 1991.
\newblock \href {https://doi.org/10.1007/bf02574703}
  {\path{doi:10.1007/bf02574703}}.

\bibitem{chazelle1996linear}
Bernard Chazelle and Jiri Matousek.
\newblock On linear-time deterministic algorithms for optimization problems in
  fixed dimension.
\newblock {\em Journal of Algorithms}, 21(3):579--597, 1996.
\newblock \href {https://doi.org/10.1006/jagm.1996.0060}
  {\path{doi:10.1006/jagm.1996.0060}}.

\bibitem{chin1991shortest}
Wei-Pang Chin and Simeon Ntafos.
\newblock Shortest watchman routes in simple polygons.
\newblock {\em Discrete \& Computational Geometry}, 6(1):9--31, 1991.
\newblock \href {https://doi.org/10.1007/bf02574671}
  {\path{doi:10.1007/bf02574671}}.

\bibitem{djidjev1992ano}
Hristo~N Djidjev, Andrzej Lingas, and J{\"o}rg-R{\"u}diger Sack.
\newblock An ${O}(n \log n)$ algorithm for computing the link center of a
  simple polygon.
\newblock {\em Discrete \& Computational Geometry}, 8(2):131--152, 1992.
\newblock \href {https://doi.org/10.1007/bf02293040}
  {\path{doi:10.1007/bf02293040}}.

\bibitem{touring}
Moshe Dror, Alon Efrat, Anna Lubiw, and Joseph S~B Mitchell.
\newblock Touring a sequence of polygons.
\newblock In {\em Proceedings of the Thirty-Fifth Annual ACM Symposium on
  Theory of Computing (STOC 2003)}, pages 473--482, 2003.
\newblock \href {https://doi.org/10.1145/780542.780612}
  {\path{doi:10.1145/780542.780612}}.

\bibitem{dyer2017linear}
Martin Dyer, Bernd G{\"a}rtner, Nimrod Megiddo, and Emo Welzl.
\newblock Linear programming.
\newblock In Csaba D.~T\'oth Jacob E.~Goodman, Joseph~O’Rourke, editor, {\em
  Handbook of Discrete and Computational Geometry}, pages 1291--1309. Chapman
  and Hall/CRC, 2017.
\newblock \href {https://doi.org/10.1201/9781420035315.pt6}
  {\path{doi:10.1201/9781420035315.pt6}}.

\bibitem{dyer1984linear}
Martin~E. Dyer.
\newblock Linear time algorithms for two- and three-variable linear programs.
\newblock {\em SIAM Journal on Computing}, 13(1):31--45, 1984.
\newblock \href {https://doi.org/10.1137/0213003} {\path{doi:10.1137/0213003}}.

\bibitem{ghosh2007visibility}
Subir~Kumar Ghosh.
\newblock {\em Visibility Algorithms in the Plane}.
\newblock Cambridge University Press, 2007.
\newblock \href {https://doi.org/10.1017/cbo9780511543340}
  {\path{doi:10.1017/cbo9780511543340}}.

\bibitem{SPT_linear}
Leonidas Guibas, John Hershberger, Daniel Leven, Micha Sharir, and Robert~E
  Tarjan.
\newblock Linear-time algorithms for visibility and shortest path problems
  inside triangulated simple polygons.
\newblock {\em Algorithmica}, 2(1-4):209--233, 1987.
\newblock \href {https://doi.org/10.1007/bf01840360}
  {\path{doi:10.1007/bf01840360}}.

\bibitem{guibas1989optimal}
Leonidas~J Guibas and John Hershberger.
\newblock Optimal shortest path queries in a simple polygon.
\newblock {\em Journal of Computer and System Sciences}, 39(2):126--152, 1989.
\newblock \href {https://doi.org/10.1016/0022-0000(89)90041-x}
  {\path{doi:10.1016/0022-0000(89)90041-x}}.

\bibitem{hershberger1995pedestrian}
John Hershberger and Subhash Suri.
\newblock A pedestrian approach to ray shooting: Shoot a ray, take a walk.
\newblock {\em Journal of Algorithms}, 18(3):403--431, 1995.
\newblock \href {https://doi.org/10.1006/jagm.1995.1017}
  {\path{doi:10.1006/jagm.1995.1017}}.

\bibitem{jadhav1996optimal}
Shreesh Jadhav, Asish Mukhopadhyay, and Binay Bhattacharya.
\newblock An optimal algorithm for the intersection radius of a set of convex
  polygons.
\newblock {\em Journal of Algorithms}, 20(2):244--267, 1996.
\newblock \href {https://doi.org/10.1006/jagm.1996.0013}
  {\path{doi:10.1006/jagm.1996.0013}}.

\bibitem{ke1989computing}
Yan Ke and Joseph O'Rourke.
\newblock Computing the kernel of a point set in a polygon.
\newblock In {\em Workshop on Algorithms and Data Structures (WADS 1989)},
  pages 135--146. Springer, 1989.
\newblock \href {https://doi.org/10.1007/3-540-51542-9_12}
  {\path{doi:10.1007/3-540-51542-9_12}}.

\bibitem{lee1979optimal}
Der-Tsai Lee and Franco~P Preparata.
\newblock An optimal algorithm for finding the kernel of a polygon.
\newblock {\em Journal of the ACM (JACM)}, 26(3):415--421, 1979.
\newblock \href {https://doi.org/10.1145/322139.322142}
  {\path{doi:10.1145/322139.322142}}.

\bibitem{megiddo_linear}
Nimrod Megiddo.
\newblock Linear-time algorithms for linear programming in ${R}^3$ and related
  problems.
\newblock {\em SIAM Journal on Computing}, 12(4):759--776, 1983.
\newblock \href {https://doi.org/10.1137/0212052} {\path{doi:10.1137/0212052}}.

\bibitem{megiddo1984linear}
Nimrod Megiddo.
\newblock Linear programming in linear time when the dimension is fixed.
\newblock {\em Journal of the ACM (JACM)}, 31(1):114--127, 1984.
\newblock \href {https://doi.org/10.1145/2422.322418}
  {\path{doi:10.1145/2422.322418}}.

\bibitem{megiddo_spanned_ball}
Nimrod Megiddo.
\newblock On the ball spanned by balls.
\newblock {\em Discrete \& Computational Geometry}, 4(6):605--610, 1989.
\newblock \href {https://doi.org/10.1007/bf02187750}
  {\path{doi:10.1007/bf02187750}}.

\bibitem{mustafa2022sampling}
Nabil~H Mustafa.
\newblock {\em Sampling in Combinatorial and Geometric Set Systems}, volume 265
  of {\em Mathematical Surveys and Monographs}.
\newblock American Mathematical Society, 2022.
\newblock \href {https://doi.org/10.1090/surv/265}
  {\path{doi:10.1090/surv/265}}.

\bibitem{oh2020voronoi}
Eunjin Oh and Hee-Kap Ahn.
\newblock Voronoi diagrams for a moderate-sized point-set in a simple polygon.
\newblock {\em Discrete \& Computational Geometry}, 63(2):418--454, 2020.
\newblock \href {https://doi.org/10.1007/s00454-019-00063-4}
  {\path{doi:10.1007/s00454-019-00063-4}}.

\bibitem{oh2020geodesic}
Eunjin Oh, Luis Barba, and Hee-Kap Ahn.
\newblock The geodesic farthest-point {V}oronoi diagram in a simple polygon.
\newblock {\em Algorithmica}, 82(5):1434--1473, 2020.
\newblock \href {https://doi.org/10.1007/s00453-019-00651-z}
  {\path{doi:10.1007/s00453-019-00651-z}}.

\bibitem{pollack_sharir}
Richard Pollack, Micha Sharir, and G{\"u}nter Rote.
\newblock Computing the geodesic center of a simple polygon.
\newblock {\em Discrete \& Computational Geometry}, 4(6):611--626, 1989.
\newblock \href {https://doi.org/10.1007/bf02187751}
  {\path{doi:10.1007/bf02187751}}.

\bibitem{sharir1992combinatorial}
Micha Sharir and Emo Welzl.
\newblock A combinatorial bound for linear programming and related problems.
\newblock In {\em Annual Symposium on Theoretical Aspects of Computer Science
  (STACS 1992)}, pages 567--579. Springer, 1992.
\newblock \href {https://doi.org/10.1007/3-540-55210-3_213}
  {\path{doi:10.1007/3-540-55210-3_213}}.

\bibitem{toth2017handbook}
Csaba~D Toth, Joseph O'Rourke, and Jacob~E Goodman, editors.
\newblock {\em Handbook of Discrete and Computational Geometry}.
\newblock CRC press, 2017.
\newblock \href {https://doi.org/10.1201/9781315119601}
  {\path{doi:10.1201/9781315119601}}.

\bibitem{toussaint1989computing}
Godfried~T Toussaint.
\newblock Computing geodesic properties inside a simple polygon.
\newblock {\em Revue d'Intelligence Artificielle}, 3(2):9--42, 1989.
\newblock URL:
  \url{https://pascal-francis.inist.fr/vibad/index.php?action=getRecordDetail&idt=6661941}.

\bibitem{wang2019quickest}
Haitao Wang.
\newblock Quickest visibility queries in polygonal domains.
\newblock {\em Discrete \& Computational Geometry}, 62(2):374--432, 2019.
\newblock \href {https://doi.org/10.1007/s00454-019-00108-8}
  {\path{doi:10.1007/s00454-019-00108-8}}.

\bibitem{wang2021}
Haitao Wang.
\newblock An optimal deterministic algorithm for geodesic farthest-point
  {V}oronoi diagrams in simple polygons.
\newblock In {\em 37th International Symposium on Computational Geometry (SoCG
  2021)}, volume 189 of {\em Leibniz International Proceedings in Informatics
  (LIPIcs)}, pages 59:1--59:15. Schloss Dagstuhl -- Leibniz-Zentrum f{\"u}r
  Informatik, 2021.
\newblock \href {https://doi.org/10.4230/LIPIcs.SoCG.2021.59}
  {\path{doi:10.4230/LIPIcs.SoCG.2021.59}}.

\end{thebibliography}

\newpage

\begin{appendix}

\end{appendix}

\end{document}